\documentclass[12pt]{article}
\usepackage{CJKutf8}
\usepackage{times,amssymb,amsmath,amsfonts,eurosym,geometry,ulem,graphicx,caption,color,setspace,sectsty,comment,footmisc,caption,pdflscape,array,threeparttable,tikz,caption,natbib,bbm} 
\PassOptionsToPackage{draft}{graphicx}
\usepackage{subfig}
\newsavebox{\measurebox}
\usepackage[scale=2]{ccicons}
\usepackage{pgfplots}
\usepgfplotslibrary{dateplot}
\usetikzlibrary{intersections,shapes.multipart}
\usepackage{pifont}
\usetikzlibrary{tikzmark}
\usetikzlibrary{shapes,arrows,backgrounds,positioning}
\usetikzlibrary{intersections,shapes.multipart}
\usetikzlibrary{fit,calc}

\normalem
\usetikzlibrary{intersections,shapes.multipart}
\usepackage{pifont}
\usetikzlibrary{tikzmark}
\usetikzlibrary{shapes,arrows,backgrounds,positioning}
\usetikzlibrary{intersections,shapes.multipart}
\usepackage{pifont}
\usepackage{tikz}
\geometry{left=1.0in,right=1.0in,top=1.0in,bottom=1.0in}
\usetikzlibrary{fit,calc}
\usepackage{booktabs} 

\usepackage[colorlinks=true,
citecolor=blue,
linkcolor=blue,
anchorcolor=blue,
urlcolor=blue]{hyperref}

\usepackage{tikz,pgfplots}
\usepackage{amsmath,amsthm}
\newtheorem{hypothesis}{Implication}
\newtheorem{assumption}{Assumption}

\newtheorem{corollary}{Corollary}
\newtheorem{proposition}{Proposition}

\theoremstyle{definition}
\newtheorem{definition}{Definition}
\theoremstyle{example}

\newcommand{\1}{\mathbf{1}}

\usepackage[titletoc,toc,title,page,header]{appendix}
\usepackage{minitoc}

\noptcrule
\usepackage{caption}

\title{A Formal Theory of Survey Experiment Generalizability: Attention and Salience\thanks{We thank Donald Green, Melody Huang, John Marshall, Cyrus Samii, Tara Slough, and Anton Strezhnev
; Participants at Experiment Conference at New York University,  Center for Data Science of Renmin University of China, Polmeth and MPSA 2024. We also thank Raymond Liu for his excellent research assistance.}}

\author{Jiawei Fu\footnote{Assistant Professor, Duke University \url{jiawei.fu@duke.edu}}  \quad Xiaojun Li\footnote{Professor of Political Science, University of British Columbia \url{xiaojun.li@ubc.ca}}}
\date{\today}

\begin{document}

\maketitle
\begin{abstract}

\noindent Survey experiments are widely used to identify causal effects in political science and the social sciences. Yet researchers are typically interested in more than the internal validity of an experimentally induced contrast. They also want to know whether the estimated effect corresponds to the effect in the real world. We develop a formal theory of survey experiment generalizability grounded in behavioral microfoundations. The theory highlights two mechanisms.  First, the survey environment shapes \emph{attention}: it determines which considerations enter the respondent's active consideration set. Second, it shapes \emph{salience}: conditional on consideration, it influences the relative weight assigned to those considerations. This framework yields two main results. Consideration-set compression generates amplification: survey-experimental effects can be larger in magnitude than their real-world counterparts, even for the same individuals, treatment content, and outcome. Context-dependent salience generates sign instability: the direction of the survey effect need not coincide with the direction of the corresponding real-world effect. The theory clarifies what survey experiments identify, when those effects are likely to generalize, and how survey designs can be modified to improve decision-environment transportability.
\vspace{.1in}

\noindent\textbf{Keywords: }  Survey Experiment, External Validity, Generalizability, Attention, Salience.

\end{abstract}

\thispagestyle{empty}
\doublespacing
\clearpage

\doparttoc 
\faketableofcontents 

\setcounter{page}{1}

\section*{Introduction}

Survey experiments are a core empirical tool in political science and the broader social sciences. Researchers use them to study candidate choice, policy support, accountability, punishment, among other outcomes \citep{eggers2018corruption,carnes2016voters,sanbonmatsu2002gender}. Because treatment is randomized, survey experiments are widely regarded as providing credible causal evidence. Between 2021 and 2025, a total of 204 survey experiments were published in leading political science journals, as shown in Table \ref{tab:journal}. Following the classification proposed by EGAP, we categorize these studies into five types: conjoint, endorsement, list, priming, and vignette/factorial.\footnote{https://egap.org/resource/10-things-to-know-about-survey-experiments/}
 Figure \ref{fig:dis} presents the distribution of these categories over time. Among them, vignette/factorial and conjoint designs are the two dominant approaches.

However, in most substantive applications, internal validity is not the end of the story. Researchers are often interested in a stronger claim: how the same information, attribute, or intervention would operate in real-world environments beyond the survey setting. For example, a candidate trait that shifts support in a conjoint experiment is often interpreted as evidence of how voters would weigh that trait in actual elections. Similarly, a corruption message that affects respondents’ evaluations in a vignette is often taken to reflect how such information would influence real-world political judgments. Survey experiments are often viewed as a powerful tool for learning about real-world effects \citep{gaines2007logic}. Yet, under what conditions is such generalizability warranted? More fundamentally, what mechanisms determine whether findings from survey experiments successfully translate to real-world behavior, and when might they fail to do so?

\begin{figure}[!ht]
    \centering
    \includegraphics[width=0.7\linewidth]{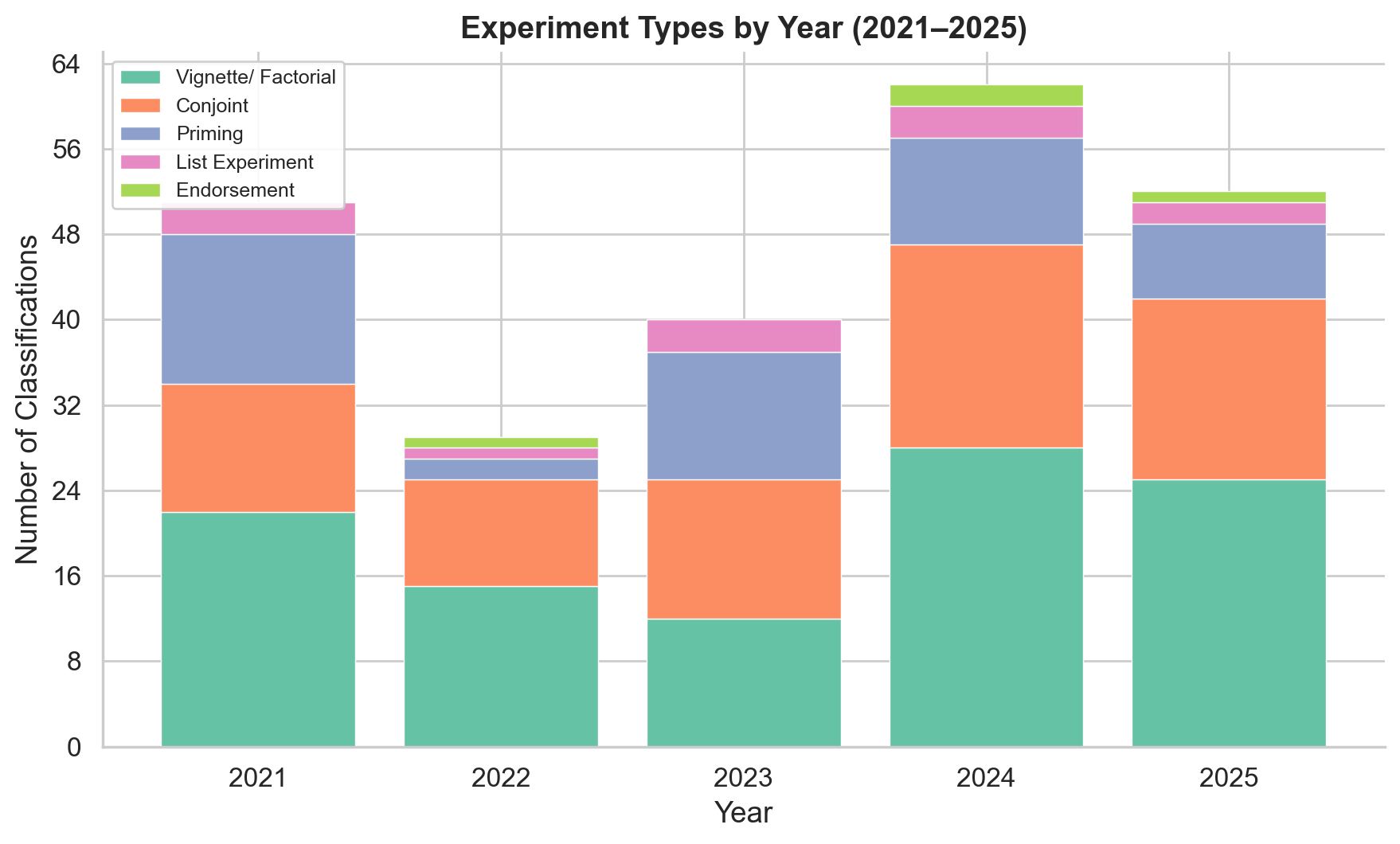}
    \caption{Distribution of Survey Experiments in AJPS, APSR, JOP, PA}
    \label{fig:dis}
\end{figure}

\begin{table}[htbp]
\centering
\small
\caption{Survey experiments by journal and subtype from 2012 to  2015}
\label{tab:journal}
\begin{tabular}{lrrrrrr}
\toprule
Journal & Total & Conjoint & Endorsement & List & Priming & Vignette/Factorial \\
\midrule
American Journal of Political Science & 41 & 14 & 2 & 1 & 8 & 20 \\
American Political Science Review & 55 & 16 & 0 & 6 & 12 & 27 \\
Political Analysis & 9 & 8 & 0 & 1 & 1 & 0 \\
The Journal of Politics & 99 & 33 & 2 & 4 & 24 & 55 \\
\midrule
Total & 204 & 71 & 4 & 12 & 45 & 102 \\
\bottomrule
\end{tabular}
\vspace{0.35em}

\parbox{0.92\linewidth}{\footnotesize Note: Subtype counts are not mutually exclusive because some papers have more than one survey experiment subtypes.}
\end{table}

In this study, we develop a formal framework that integrates a behavioral model with the potential outcomes framework to characterize two mechanisms that shape the generalizability of survey experiments. In the existing literature, generalizability and external validity are typically understood as the ability to make inferences about a broader population \citep{findley2021external}. By contrast, we take a more fundamental step by asking: even when we fix the same individual, the same treatment, and the same outcome measure, does the effect estimated in a survey experiment inform the corresponding real-world effect? If the answer is negative, then concerns about generalizing survey experimental results to other populations become secondary.

More specifically, following \citet{egami2023elements}, we focus on two dimensions of generalizability: effect magnitude and effect sign. \textit{Effect magnitude generalization} requires that the size of the estimated effect in the survey experiment corresponds to its magnitude in the real world. \textit{Effect sign generalization}, by contrast, requires only that the direction (positive or negative) of the effect is preserved across settings. Importantly, an experiment need not satisfy both criteria simultaneously; the relevant notion of generalizability depends on the researcher’s objective. For example, policy evaluations often require accurate estimates of effect magnitudes to assess welfare implications, and thus demand magnitude consistency. In contrast, when experiments are used to test predictions derived from formal theory, preserving the direction of the effect is often sufficient. In some cases, experimental designs may even deliberately amplify treatment variation to increase statistical power, making sign consistency the more appropriate benchmark.

Building on our model, we characterize two mechanisms that govern the generalizability of both effect magnitude and effect sign. These mechanisms rest on two behavioral microfoundations.

The first is \textit{limited attention}. Conventional models of decision-making typically assume that individuals possess unlimited attention—that is, they account for all relevant attributes when making choices. While such assumptions yield tractable and often sharp theoretical predictions, they have been increasingly challenged by empirical anomalies. Classic examples include the paradoxes documented by \citet{allais1953comportement} and \citet{ellsberg1961risk}, which illustrate systematic violations of expected utility theory.\footnote{The former shows that individuals frequently violate the independence axiom of expected utility theory, while the latter demonstrates ambiguity aversion inconsistent with subjective expected utility theory.} In practice, individuals rarely process all potentially relevant considerations. Instead, they rely on a selective and context-dependent \textit{consideration set}. This notion has long been recognized in the social sciences. For example, consumer choices may favor option $y$ in the presence of $x$, yet this preference can reverse when $x$ is not readily available \citep{masatlioglu2016revealed}. We incorporate this idea into the causal inference framework for survey experiments and show that differences in consideration sets across environments play a central role in shaping generalizability. In particular, the more restricted consideration sets typically induced by survey environments can mechanically amplify estimated effects relative to their real-world counterparts.

The second mechanism arises from \textit{salience}. Even within a given consideration set, attributes are not weighted equally; some dimensions attract disproportionate attention. This phenomenon is well documented. For instance, \citet{chetty2009salience} show that consumers underreact to taxes when tax-inclusive prices are not prominently displayed, but reduce demand by approximately eight percent when such prices are made salient. To formalize this mechanism, we draw on the psychologically grounded model of choice developed by \citet{bordalo2012salience,bordalo2013salience}, in which decision weights are distorted by the relative salience of attributes. We show that salience, when combined with limited attention, can generate failures of effect sign generalization. The key intuition is that salience distorts the relative weighting of attributes, effectively inducing correlations among them in the decision process. When consideration sets differ between survey and real-world environments, these salience-driven distortions operate on different subsets of attributes, potentially reversing the direction of the estimated effect. 

Although the core components of our theory—consideration sets and salience—are not directly observable, we assess the empirical relevance of our framework by deriving a set of testable implications from the theoretical results. A central prediction is that the estimated effect attenuates, and may even reverse, as the number of attributes increases in some experiments. Intuitively, expanding the attribute space alters both the consideration set and the relative salience of each attribute, thereby changing the mapping from experimental effects to their real-world counterparts. We examine this prediction using data from multiple existing studies and supplement the analysis with additional pre-registered conjoint experiments.\footnote{The study is pre-registered and approved by the authors’ Institutional Review Board; further details are provided in the SI \ref{si:design}.} As the well-known adage suggests, “all models are wrong, but some are useful.” We do not claim that our framework exhausts all mechanisms underlying the empirical patterns we document. Rather, our goal is to isolate two theoretically grounded mechanisms—limited attention and salience—and to show that they generate distinctive, testable predictions. 

Attention and salience have already emerged as important considerations in the study of conjoint experiments. For example, \citet{jenke2021using} and \citet{bansak2025odd} employ eye-tracking techniques to examine how respondents allocate attention across attributes. We build on this line of work by providing a formal framework that characterizes the conditions under which such patterns arise and how they affect causal interpretation. We emphasize that the definitions of attention and salience are quite versatile in the literature. Here, we define limited attention as a situation in which only a subset of attributes is under consideration, and we define salience as the extent to which the value of an attribute affects its importance. These definitions follow the framework of \citet{bordalo2012salience}. This idea is closely related to the concept of extremity \citep{liu1998catastrophic, abelson2014attitude, krosnick1992case}.

More broadly, our results suggest that, among survey-based designs, well-constructed conjoint experiments may offer advantages for generalizability. By presenting multiple attributes simultaneously, conjoint designs encourage respondents to consider a richer set of factors, thereby mitigating distortions induced by limited attention and salience. When the set of attributes included in the experiment sufficiently captures the core and relevant dimensions of real-world decision-making, concerns about generalizability are correspondingly reduced. This perspective helps rationalize prior empirical findings documenting the robustness of conjoint experiments \citep{jenke2021using,hainmueller2015validating,bansak2018number, bansak2021beyond}, and provides a microfoundation for understanding when such robustness should be expected.

Our research intersects with and contributes to several important strands of literature. At its core, our study engages with a central question in the literature on external validity (\citealt{barabas2010survey}; \citealt{list2005laboratory}; \citealt{gaines2007logic}), particularly the external validity of survey experiments. \citet{slough2023external,slough2022sign} develop a formal framework to clarify external validity by identifying conditions under which different studies are target-equivalent and target-congruent. Their framework emphasizes the importance of harmonizing studies along key dimensions, including treatment, control, and outcome measurement. Similarly, \citet{egami2023elements} propose a unified framework encompassing multiple dimensions of external validity—population, context, treatment, and outcome—building on the typology of \citet{cook2002experimental}. They further develop estimators for both effect magnitude and sign generalization. We complement this literature by providing a behavioral foundation for understanding when and why effect magnitude and sign generalization succeed or fail.

Much of this literature focuses on generalizing results to new populations (\citealt{mullinix2015generalizability}; \citealt{huang2022sensitivity}). By contrast, our paper examines a more fundamental form of external validity: the consistency between experimental effects and real-world effects, even when holding fixed the same individuals, treatments, and outcomes. A variety of factors may undermine such consistency, including well-known phenomena such as the Hawthorne effect \citep{adair1984hawthorne}. We contribute to this literature by providing a new explanation tailored to settings involving multidimensional decision-making.

More specifically, concerns about the generalizability of survey experiments have been widely documented, with mixed empirical evidence \citep{barabas2010survey,findley2017external,hainmueller2014survey}. Existing explanations for threats to external validity include features of experimental design \citep{de2020commensurability} and information (non)equivalence across settings \citep{dafoe2018information}. Departing from purely statistical approaches, we develop a behavioral framework that highlights the roles of limited attention and salience as key mechanisms underlying these discrepancies.

Our work also contributes to the emerging literature on the Theoretical Implications of Empirical Methods (TIEM) \citep{FU_SLOUGH_2026, slough2023external,de2020commensurability,slough2023phantom}, which evaluates empirical methodologies through the lens of formal theory. In contrast to the Empirical Implications of Theoretical Models (EITM) approach, TIEM emphasizes how methodological choices themselves embed implicit theoretical assumptions. Within this perspective, our application of decision theory and behavioral economics to survey experiments reveals systematic discrepancies between experimental estimates and real-world effects, extending beyond the aggregate-level inconsistencies documented by \citet{abramson2019we}.

Finally, we contribute to the literature on salience in political economy. While salience theory has been widely applied to topics such as electoral behavior, party competition, agenda setting, and judicial decision-making \citep{moniz2020issue, dragu2016agenda,riker1986art,ascencio2015endogenous,guthrie2000inside,viscusi2001jurors, bartels1986issue,niemi1985new,repass1929issue}, its implications for research design have received less attention. By incorporating the framework developed by \citet{bordalo2012salience,bordalo2013salience,bordalo2016competition} into the study of survey experiments, we highlight salience as a central consideration in experimental methodology, thereby extending its scope and applicability.

\section{Model of Survey Experiments}

Causal effects are typically formalized within the potential outcomes framework. To illustrate the mechanisms underlying generalizability, we develop a model that integrates behavioral microfoundations of attention and salience with the potential outcomes framework. We explicitly incorporate environmental features into the potential outcomes and emphasize that the target estimand in a survey experiment depends on the consideration set and the salience of attributes.

\subsection{Attention and Salience}

Consider a survey experiment conducted in a survey environment. Let \(E^s\) denote the survey environment, and \(E^r\) denote the target real-world environment. Let the informational content available to respondent \(i\) be summarized by an attribute vector $X_i = (X_{i1},\dots,X_{iK}) \in \mathcal{X}$.
The components of \(X_i\) can be interpreted broadly. In a conjoint experiment, they correspond to profile attributes. In a vignette experiment, they represent features of a scenario. In an informational treatment, they capture the arguments, facts, or cues embedded in the message. In candidate or policy evaluation tasks, they describe the characteristics of the object being evaluated. Throughout the paper, we refer to $X_{ik}$ as attribute, and to its specific realization $x_{ik}$ as the value of the attribute.

Individuals have limited attention. A large body of research shows that, when making decisions—such as purchasing a computer—individuals do not evaluate the full set of available options or attributes \citep{hausman2008mindless}. Instead, they focus on a subset of relevant information. In the literature, the subset of attributes or alternatives that a respondent attends to when making a decision is referred to as the consideration set \citep{hausman2008mindless,wright1977phased}. It is well established that, due to cognitive constraints, individuals cannot allocate sufficient attention to all potentially relevant attributes \citep{stigler1961economics,jones2005politics,chetty2009salience}. As a result, the consideration set is typically a strict subset of the full attribute space that may, in principle, influence a decision. Formally, let
\[
C_i(E) \subseteq \{1,\dots,K\}
\]
denote respondent \(i\)'s active consideration set under environment \(E \in \{E^r,E^s\}\). If \(k \notin C_i(E)\), then attribute \(k\) does not enter the active evaluation in environment \(E\). 

Following the salience model \citep{bordalo2012salience,bordalo2013salience}, given an attribute vector $X_i$, respondent $i$ evaluates it in environment $E$ as:
$$
V_i(X_i,E)=\sum_{k=1}^K \alpha_{ik}(X_i,E) u_{ik}(X_{ik})=\sum_{k \in C_i(E)} \alpha_{ik}(X_i,E) u_{ik}(X_{ik})
$$ where $\alpha_{ik}(X_i,E)>0$ denotes the salience weight assigned to attribute $k$, satisfying $\sum_{k=1}^K \alpha_{ik}(X_i,E)=1$. The function $u_{ik}(X_{ik})$ represents the utility contribution of attribute $k$. We assume that the baseline utility component $u_{ik}$ is invariant across environments. This restriction allows us to isolate the role of attention and salience—captured by $C_i(E)$ and $\alpha_{ik}(X_i,E)$—in shaping differences between the survey and real-world evaluations. It is worth noting that the environment $E$ may influence many aspects of the decision-making process. In this study, however, we focus exclusively on the roles of attention and salience.

The key idea is that salience $\alpha_{ik}(X_i,E)$ need not be fixed. As emphasized by \citet{taylor1982stalking}, "salience refers to the phenomenon that when one’s attention is differentially directed to one portion on the environment rather than to others, the information contained in that portion will receive disproportionate weighing in subsequent judgment." This idea is formalized by a salience function $\sigma(x_k,\overline{X}_k)$, where $x_k$ denotes the realized value of attribute $X_k$, and $\overline{X}_k$ is a reference point. The reference may be individual-specific or determined by the experimental context. For example, in a vignette experiment where the attribute is the level of corruption, each respondent $i$ may have a baseline perception of the typical level of corruption among politicians. In a conjoint experiment, where respondents evaluate two hypothetical politicians, a natural reference point is the average level of corruption across the two profiles. 

The salience function captures the “distance” between an attribute value and its reference point. Intuitively, attributes that deviate more from the reference point receive greater salience. A more formal treatment is provided in \citet{bordalo2012salience}. The salience function is typically assumed to satisfy several axiomatic properties. First, for any two intervals $[x,y]$ and $[x',y']$, with $[x,y]$ fully contained within $[x',y']$, it holds that $\sigma(x,y) < \sigma(x',y')$. This reflects the psychological principle that larger contrasts are more perceptually salient. Second, the function satisfies homogeneity of degree zero: $\sigma(b x, b y) = \sigma(x,y)$ for any positive scalar $b$. A commonly used functional form that satisfies these properties is $\sigma(x,y) = \frac{|x-y|}{y}$.

Now we introduce how salience operates. Let $r_{ik}(X_i,E)\in\{1,\dots,|C_i(E)|\}$
denote the salience rank of attribute \(k\) among the active considerations in environment \(E\), where smaller values indicate greater salience. The effective salience weight $\alpha_{ik}$ takes the form
\begin{equation}
\alpha_{ik}(X_i,E)
=
\frac{\bar{\alpha}_{ik}\,\delta^{\,r_{ik}(X_i,E)-1}}
{\sum_{j\in C_i(E)} \bar{\alpha}_{ij}\,\delta^{\,r_{ij}(X_i,E)-1}},
\qquad k\in C_i(E),
\label{eq:salience_weight}
\end{equation}
where \(\delta \in (0,1]\) indexes the strength of salience, and $\bar{\alpha}_{ik}$ denotes the baseline salience. When attribute $j$ is more salient than attribute $k$, as determined by the salience function, i.e., $\sigma(x_{ij},\overline{X}_{ij})>\sigma(x_{ik},\overline{X}_{ik})$, it follows that $r_{ij}<r_{ik}$. Consequently, under the effective salience weights, attribute $X_{k}$ is more heavily discounted, since $\delta^{r_{ik}-1}<\delta^{r_{ij}-1}$. 

To illustrate the mechanism, consider a simple example. For ease of exposition, we suppress the individual subscript $i$. Suppose a candidate-choice conjoint experiment with two attributes: age $X_1$ and gender $X_2$. Let $\alpha_1$ and $\alpha_2$ denote the baseline salience weights, reflecting the importance of age and gender in candidate evaluation. When respondents observe two hypothetical candidates, salience is determined by how each attribute value deviates from its reference point. Without loss of generality, suppose that in the conjoint setting the reference for each attribute is given by the average across the two candidates $\overline{X}_j=\frac{x_1+x_2}{2}$. If, for a given comparison, gender is more salient than age, i.e., $\sigma(x_1,\overline{X}_1) < \sigma(x_2,\overline{X}_2)$, then $r_1 = 2$ and $r_2 = 1$. The resulting evaluation $V(X_1,X_2,E^s)$ with effective salience is

\begin{equation}\label{equ:sal}
    V(X_1,X_2,E^s) = \begin{cases}
     \frac{\alpha_1}{\alpha_1 + \delta \alpha_2} u_1(x_{1}) + \frac{\delta \alpha_2}{\alpha_1 + \delta \alpha_2} u_2(x_{2})  & \text{if }  \sigma(x_{1},\overline{x}_1) > \sigma(x_{2},\overline{x}_2)\\
    \frac{\delta \alpha_1}{\delta\alpha_1 + \alpha_2} u_1(x_{1}) + \frac{\alpha_2}{\delta \alpha_1 + \alpha_2} u_2(x_{2},)  & \text{if }  \sigma(x_{1},\overline{x}_1) < \sigma(x_{2},\overline{x}_2)\\
    \alpha_1 u_1(x_{1}) + \alpha_2 u_2(x_{2}) & \text{if } \sigma(x_{1},\overline{x}_1) = \sigma(x_{2},\overline{x}_2)
    \end{cases}
\end{equation}

Accordingly, when an attribute, such as $x_1$, is relatively more salient, its weight in the utility function ($\alpha_1$) is effectively increased to $\frac{\alpha_1}{\alpha_1 + \delta \alpha_2}$, while the weight on the less salient attribute, $\alpha_2$, is reduced to $\frac{\delta \alpha_2}{\alpha_1 + \delta \alpha_2}$.The denominator ensures that the salience weights sum to one. The parameter $\delta$ governs the strength of the salience effect. When $\delta = 1$, there is no salience distortion, and attributes are weighted according to their baseline importance. As $\delta$ decreases below one, relatively more salient attributes receive disproportionately greater weight, while less salient attributes are increasingly discounted.

\subsection{Causal Effects}

We use $Z_i$ to denote treatment. In survey experiments, the treatment typically determines the values of the attributes that respondents observe. Accordingly, the attribute vector $X_i(Z_i)$ can be viewed as a function of $Z_i$. The individual causal effect on utility, comparing $Z_i=z$ and $Z_i=z'$ in environment $E$, is defined as 
$$
ICE_i(E)=V_i(X_i(z);E)-V_i(X_i(z');E).
$$ where $=V_i(X_i(z);E)$ denotes the potential utility under treatment $z$.\footnote{For conjoint experiments, please see SI \ref{si:conjoint}. For factorial designs, the framework can be readily extended to accommodate multiple treatments.} 

The outcome variable need not directly measure this latent evaluation. Instead, it is generally a function of the observed response, given by $ Y_i = g_i(V_i(X_i;E))$. This formulation nests several commonly used outcomes: forced-choice decisions, where \(g_i(v)=\1\{v\ge 0\}\); ratings or thermometer scales, where \(g_i\) is the identity or a discretized monotone transformation; and approval or support indicators, where \(g_i\) follows a threshold rule. Because \(g_i\) is typically weakly increasing, the latent comparison in $V_i$ remains the central theoretical object. Our results extend to any monotone transformation of $V_i$. 

As discussed in the introduction, we examine whether causal effects identified in the survey environment generalize to the real-world environment while holding fixed the individuals, treatments, and outcome mappings. This constitutes a necessary precondition for any subsequent discussion of generalizability to a broader population. Following \citet{egami2023elements}, we focus on two key dimensions of generalizability: effect magnitude and effect sign.

\begin{definition}[Effect Magnitude Equivalence]
    $ICE_i(z,z';E^s)=ICE_i(z,z';E^r)$
\end{definition}

\begin{definition}[Effect Sign Congruence]
    $sign[ICE_i(z,z';E^s)]=sign[ICE_i(z,z';E^r)]$
\end{definition}

Generalizability of average effects follows naturally once we aggregate individual-level causal effects across the population. We focus on ICE because it provides a clear way to illustrate the mechanism at the individual level and, as discussed earlier, because our notion of weaker generalizability involves holding individuals, treatments, and outcome mappings fixed.

\section{Limited Attention and Survey Experiment Generalizability}
\label{sec:attention}

This section studies how attention mechanisms affect generalizability, with a particular focus on the equivalence of effect magnitudes. To isolate the role of attention, we first suppress salience distortions and examine how generalizability is affected when the survey environment alters only the active consideration set. Accordingly, as a benchmark, we assume $\delta=1$ throughout this section. The evaluation function then simplifies to $V_i(X_i,E)=\sum_{h \in C_i(E)}\alpha_{ik} u_{ik}(X_{ik})$, where the (effective) salience $\alpha_{ik}$ are invariant to attribute values.

Individuals have limited attention. In most decision-making contexts, it is unrealistic for individuals to consider all potentially relevant features. Instead, decisions are based only on attributes that enter the active consideration set. A defining feature of the survey environment is that respondents’ attention is strongly shaped by the information explicitly provided in the experimental design. For example, in a candidate-choice experiment where only age and gender are presented, respondents are likely to focus primarily, if not exclusively, on these attributes. Ensuring that respondents attend to the provided information, rather than skim past it, is therefore a central concern in survey experiments, motivating the widespread use of attention checks to detect inattentive respondents.

By contrast, in the real-world environment, individuals are not constrained by the limited set of attributes specified by the researcher. When making analogous decisions outside the survey context, respondents may attend to a broader and potentially different set of features. To formalize this distinction, we analyze the attention mechanism under the following limited-attention assumption.

\begin{assumption}[Limited Attention]\label{ass:attention}
$C_i(E^s) \subset C_i(E^r).$
\end{assumption} The consideration set in survey experiments is often a subset of the real-world consideration set for several reasons. In many applications, this provides a natural benchmark. First, survey experiments are designed to address specific research questions. As a result, researchers typically include a selected set of substantively relevant attributes, along with additional attributes serving as controls. These attributes are therefore best understood as a subset of the real-world consideration set $C_i(E^r)$. Second, researchers rarely possess complete knowledge of all attributes that respondents consider in real-world decision-making. Even if such knowledge were available, it would generally be infeasible to present the full set of relevant attributes within the constraints of a single survey instrument.

Ideally, respondents attend primarily to the attributes presented in the survey experiment, as these are the sources of randomized variation. However, in practice, respondents may also infer unlisted characteristics or rely on prior beliefs \citep{dafoe2018information}. A more general formulation therefore allows for $C_i(E^s) \neq C_i(E^r)$. In some special cases, it is even possible that $C_i(E^s) \supset C_i(E^r)$. Under such scenarios, our results continue to hold qualitatively, although the implications shift—from amplified effect magnitudes to potentially attenuated or unequal effects. Empirically, however, available evidence is more consistent with the limited-attention benchmark, although direct evidence remains limited. 

Without loss of generality, we assume that the real-world consideration set is $C_I^{r} = \{X_1, X_2, ..., X_m\}$, with associated baseline salience weights denoted by $\alpha^r=(\alpha_{i1},...,\alpha_{im})$. The experimental consideration set is given by $C_i^{s} = \{X_1, X_2, ..., X_k\}$, with corresponding baseline salience $\alpha^s=(\alpha'_{i1},...,\alpha'_{ik})$. We assume the cardinality of the real-world consideration set, $m=|C^r_i|$, exceeds that of the experimental consideration set, $k=|C_i^{s}|$. In other words, the experimental consideration set consists of only the first $k$ attributes from the real-world consideration set.

Now, we provide intuition for how limited attention affects the magnitude of causal effects. For ease of exposition, suppose the treatment in the survey experiment changes only the first attribute $X_{i1}(Z_i)$, from $x_{i1}$ to $x'_{i1}$. Then, the utility changes from $V_i(x_{i1},x_{i-1},E^s)=\alpha_{i1}u_{i1}(x_{i1})+ \sum_{j=2}^m \alpha_j u_{ij}(x_{ij})$ to $V_i(x'_{i1},,x_{i-1},E^s)=\alpha_{i1}u_{i1}(x'_{i1})+ \sum_{j=2}^m \alpha_j u_{ij}(x_{ij})$. Therefore, the ICE in the survey experiment is $ICE_I(E^S)=V_i(x_{i1},x_{i-1},E^s)-V_i(x'_{i1},x_{i-1},E^s)=\alpha_{i1}[u_{i1}(x_{i1})-u_{i1}(x'_{i1})]$. Similarly, in the real-world environment, the ICE is $\alpha'_{i1}[u_{i1}(x_{i1})-u_{i1}(x'_{i1})]$. Unless the baseline salience weights coincide, the two effects will generally differ. Under limited attention, the survey consideration set contains fewer attributes than the real-world consideration set ($C_i(E^s) \subset C_i(E^r)$). As a result, each attribute in the smaller consideration set $C_i(E^s)$ tends to receive relatively greater weight than in $C_i(E^r)$. Consequently, it is unlikely that $\alpha_{i1}=\alpha'_{i1}$. Instead, it is more plausible that $\alpha_{i1}>\alpha'_{i1}$, implying that the causal effect is amplified in the survey environment, $ICE(E^s)>ICE(E^r)$.

Formally establishing this amplification result requires additional assumptions. In particular, when the real-world consideration set includes additional attributes, these attributes may be correlated with existing ones—especially the focal attribute $X_{i1}$-which can alter baseline salience weights and potentially attenuate or even reverse the effect. Although such scenarios are possible, our focus is on isolating the limited-attention mechanism. Moreover, well-designed survey experiments rarely omit attributes that are highly correlated with the focal attribute, as doing so would undermine the internal validity of the design. To formalize this idea, we impose that the relative importance of attributes is preserved even when additional attributes are introduced.

\begin{assumption}\label{ass:stable}
For two consideration set $C^1_i$ and $C^2_i$, and any $j,k \in C^1_i \cap C^2_i$,
$$
\frac{\alpha_{ij}}{\alpha_{ik}}=\frac{\alpha'_{ij}}{\alpha'_{ik}}
$$
\end{assumption}
Specifically, if a respondent perceives attribute $X_1$ to be more important than attribute $X_2$ in one setting, then in a comparable setting with an expanded consideration set, the respondent continues to rank $X_1$ as more important than $X_2$.
This assumption is more likely to hold in well-designed experiments that explicitly incorporate attributes highly correlated with the focal dimensions of interest, thereby ensuring that any additional attributes in the real-world consideration set are not strongly correlated with those included in the experimental design.

The following results hold for general treatment.Let $D^i=\{j\in \{1,...,k\}|u_{ij}(X_{ij}(z)) \neq u_{ij}(X_{ij}(z'))\}$ denote the set of attribute indices for which the treatment affects individual $i$. 

\begin{proposition}[Effect Magnitude Non-Generalizability]\label{thm:ampbias}

Given treatment assignment $Z=z$ and $Z=z'$, suppose the limited attention assumption \ref{ass:attention} hold, and $D_i \neq \emptyset$.

(1)  The causal effects in the survey experiment and the real world differ almost surely: $$ICE(E^s) \neq ICE(E^s)$$ if $\alpha_{ij} \neq \alpha'_{ij}$ for some $j \in C^r_s$ and $u_{ij}(X_{ij}(z)) \neq u_{ij}(X_{ij}(z'))$.

(2)  Moreover, if the assumption \ref{ass:stable} also holds, then the causal effect in the survey experiment are amplified relative to those in the real-world environment by a factor \(\delta = \frac{1}{\alpha'_1 + \alpha'_2 + \ldots + \alpha'_k} > 1\).
\end{proposition}

\begin{proof}
    All proofs are in the SI.
\end{proof}

The first result follows directly. When baseline salience differs between the survey and real-world environments due to limited attention, the magnitude of causal effects estimated in the experimental setting will, in general, not be externally valid.

Under the additional stable salience assumption, we obtain a sharper result. If the inclusion of additional attributes in the real-world environment does not distort the relative importance of the attributes already present in the experiment, then the experimental causal effect is systematically amplified relative to the real-world effect. This condition is more likely to hold when the experimental design incorporates the most substantively important attributes—particularly those that are highly correlated with other relevant dimensions—so that omitted attributes do not substantially alter relative salience rankings.

The amplification factor \(\delta\) depends on the total salience of the attributes included in the survey experiment, as measured by their real-world salience weights
$\alpha$. When respondents place substantial weight on attributes that are omitted from the experimental design, the share of total salience allocated to the included attributes is correspondingly reduced in the real world. As a result, the experimental estimate which implicitly redistributes attention over a smaller set of attributes can substantially overstate the true causal effect.

\subsubsection*{Empirical Evidence}

The formal model and Proposition \ref{thm:ampbias} illustrate the mechanism through which limited attention distorts the magnitude of experimental causal effects. We now turn to empirical evidence. Our goal is not to “prove” the model—no formal model can be proven in that sense. The value of the model here lies in its ability to clarify and organize an empirically relevant phenomenon.

One implication of Proposition \ref{thm:ampbias} is that amplification bias should decline as more attributes are included in the consideration set. Of course, neither the true consideration set nor the real-world causal effect is directly observable. What researchers can manipulate, however, is the set of attributes presented in the survey experiment. When respondents attend to the experimental task, their attention is likely to be concentrated primarily on the attributes explicitly provided by the researcher. This is particularly true for conjoint experiments compared to other types of survey experiments. We therefore focus primarily on conjoint experiments. We argue that the attributes included in the experiment largely determine, or at least dominate, the active consideration set.

To examine this implication, we first study the 67 candidate-choice conjoint and vignette experiments compiled by \citet{schwarz2022have}. In these experiments, a gender attribute is randomly assigned, and the outcome measures are comparable or can be consistently recoded across studies. This common structure allows us to isolate and compare the average causal effect of the gender attribute across studies. In addition, we collected information on the number of non-gender attributes presented in each experiment. Notably, in some studies, the number of attributes provided is as small as two.

To investigate the hypothesized relationship between the number of attributes and the AMCE of gender across these experiments, we conduct a meta-analysis. Figure \ref{fig:meta} presents the meta-regression results based on a random-effects model, both for the full sample of countries and for the United States subsample, given that the majority of experiments were conducted in the United States. The vertical axis reports the absolute value of the estimated effect, recognizing that the causal effect of gender may be negative in some studies.\footnote{Consistent with our theoretical framework, a true positive effect is expected to diminish as the number of attributes increases, whereas a true negative effect should move toward zero in magnitude (i.e., its absolute value decreases).} Across both specifications, we find a statistically significant negative relationship between the number of attributes and the experimental causal effect of gender, consistent with our theoretical predictions. The results are robust to excluding experiments with a small number of negative estimates as well as potential outliers.

\begin{figure}[!ht]
    \centering
    \includegraphics[scale=0.7]{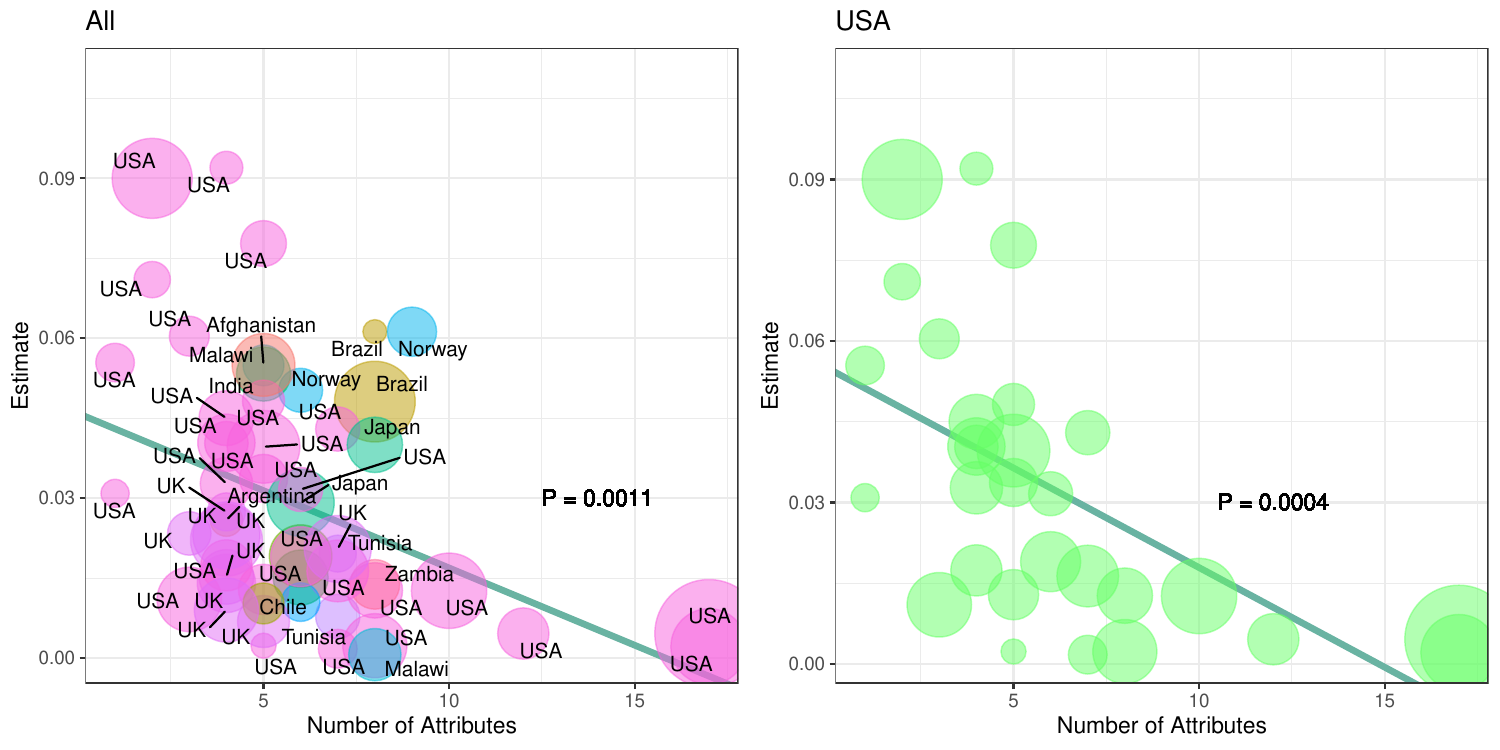}
    \caption{Test of the amplification hypothesis using meta-regression. Each point represents the estimated effect of gender from a conjoint experiment conducted in a given country, with point size proportional to the inverse of the estimate’s variance. The fitted meta-regression line indicates that the treatment effect decreases as the number of attributes increases. Full results are reported in Table \ref{tab:meta}, Columns 1–2.}
    \label{fig:meta}
\end{figure}

To address cross-study heterogeneity and more directly test our hypothesis regarding the influence of the number of attributes on experimental effects, we designed and conducted an original candidate-choice experiment with a controlled and consistent attribute set. The experiment was fielded to 1,200 respondents in the United States via Lucid in September 2023. Lucid employs a quota sampling strategy to align participant demographics with those of the U.S. Census. Prior research by \citet{coppock2019validating} shows that Lucid samples yield behavioral patterns comparable to those observed in nationally representative benchmark experiments.

Additional details on the experimental design are provided in SI \ref{si:design}. Respondents were randomly assigned to one of five groups. Within each group, participants completed six paired candidate-choice tasks. The groups differed in the number of attributes presented, while holding the specific attribute set fixed within each group.

\bigskip
\noindent \fbox{%
\parbox{\textwidth}{%
    - Group 1 was presented with only two attributes: gender and age.
    
    - Group 2 included the previous attributes plus education and tax policy, totaling four attributes.
    
    - Group 3 added race and income to the attributes in Group 2, resulting in six attributes.
    
    - Group 4 included military service and religious beliefs, bringing the total to eight attributes.
    
    - Group 5 encompassed ten attributes by adding children and marital status to those in Group 4.
}
}

\bigskip

Consistent with our meta-analytic focus, we examine the effect of gender across these groups. The results, presented in Figure \ref{fig:single}, reveal a statistically significant negative relationship between the number of attributes and the magnitude of the experimental gender effect, despite the limited number of observations (five groups). This pattern provides supporting evidence for our theory, which predicts that effect sizes diminish as the number of attributes increases.

An exception arises in Group 4 (with eight attributes), where the estimated gender effect is larger than expected. While this deviation may reflect sampling variability, an alternative explanation is that salience effects may be at play—a possibility we explore in subsequent sections.

\begin{figure}[!ht]
    \centering
    \includegraphics[scale=0.7]{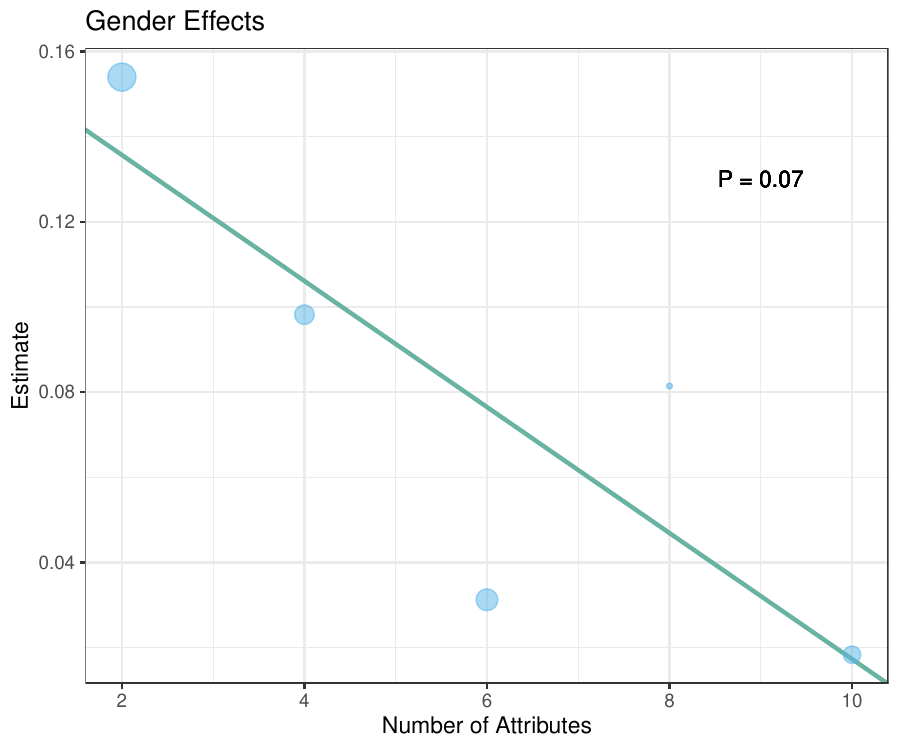}
    \caption{Test of the amplification hypothesis using a single conjoint experiment. Each point represents the estimated AMCE of the gender attribute under a given number of attributes, with point size proportional to the inverse of the estimate’s variance. The fitted line is obtained from a meta-regression. Full results are reported in Table \ref{tab:meta}, Column 3, and Table \ref{tab:full}.}
    \label{fig:single}
\end{figure}

\section{Salience and Survey Experiment Generalizability}\label{sec:salience}

In this section, we incorporate the role of salience. Building on the framework developed by \citet{bordalo2012salience}, we examine how salience, in conjunction with limited attention, further shapes the generalizability of survey experiments. In particular, we focus on the possibility of effect sign reversal across environments.

Not all attributes are equally important in decision-making. Evidence from survey experiments, including studies using eye-tracking techniques, shows that respondents process information selectively, focusing on attributes they perceive as important while ignoring less relevant ones as task complexity increases. For example, \citet{jenke2021using} demonstrate that respondents in conjoint experiments allocate attention unevenly across attributes. In salience theory, attention is differentially allocated to attributes that stand out relative to a reference point, as captured by the salience function $\sigma(x_{ij},\overline{X}_{ij})$. To operationalize this idea, we rank attributes by salience, where a smaller rank $r$ indicates greater salience. The resulting mechanism is summarized by equation \ref{eq:salience_weight}, which we reproduce here for convenience:
$$
\alpha_{ik}(X_i,E)
=
\frac{\bar{\alpha}_{ik}\,\delta^{\,r_{ik}(X_i,E)-1}}
{\sum_{j\in C_i(E)} \bar{\alpha}_{ij}\,\delta^{\,r_{ij}(X_i,E)-1}}
$$
Therefore, if an attribute lies close to its reference point, its salience is substantially discounted; conversely, attributes that deviate markedly from the reference receive relatively greater salience. The parameter $\delta$ governs the strength of this effect. To capture the possibility that perceived attribute salience varies across decision contexts, we introduce the following assumption:

\begin{assumption}[Salience Effect]\label{ass:salience}
    $\delta<1$.
\end{assumption}

The salience model emphasizes that the weight assigned to each attribute depends on the extent to which its realized value deviates from a reference point or prevailing expectation. Throughout this section, we maintain the assumption \ref{ass:stable}. Absent this condition, changes in the attribute set could alter relative salience rankings in ways that confound the mechanism of interest, making it difficult to disentangle salience-driven distortions from broader shifts in underlying preferences. 

\subsection{Effect Sign Reversal}

In the previous section, we showed that experimental estimates can be distorted by limited attention, leading to systematically amplified causal effects. Such amplification may, in some cases, be advantageous for researchers seeking to detect the direction of an effect, as it can reduce the sample size required to achieve a given level of statistical power.

A more conservative—and ultimately more fundamental—requirement for experimental evidence to meaningfully validate theoretical predictions is effect sign congruence: the experimental causal effect should have the same direction as the corresponding real-world effect \citep{slough2022sign}. However, once salience effects are introduced, this requirement may fail. In particular, salience-induced reweighting of attributes can generate effect sign reversal, thereby undermining the reliability of experimental findings as indicators of underlying causal relationships.

As is evident from equation \ref{eq:salience_weight}, salience depends on the realized attribute levels. Accordingly, the salience rank for attribute $k$ can be expressed as a function of the treatment, $r_{k}(Z_i)$. The following proposition provides a sufficient condition under which effect sign reversal may occur.

\begin{proposition}[Effect Sign Reversal]\label{prop:reverse2}
Suppose that assumptions \ref{ass:stable} and \ref{ass:salience} hold. The direction of the experimental effects may differ from their real-world counterparts if there exists an attribute $k$ such that $r_k(z_i) \neq r_k(z'_i)$.
\end{proposition}

The key condition in this proposition -- that there exists an attribute $k$ such that the salience ranking changes across treatment states -- captures a scenario in which a change in the treatment alters the relative salience of at least one attribute. As shown in the proof, such a reversal can arise regardless of the utility magnitudes associated with excluded attributes or the baseline salience of the treatment attribute itself.

To build intuition for the phenomenon of effect sign reversal, consider again a survey experiment in which the treatment affects only the first attribute. As illustrated in the left panel of Table \ref{tab:comp31}, under the control condition, the attribute values are $(x_1,x_2)$ with corresponding salience weights $(\alpha_1,\alpha_2)$. The treatment changes only $x_1$ to $\tilde{x}_1$, leading to updated salience weights $(\alpha'_1,\alpha'_2)$. In the real-world environment, suppose there is an additional attribute $X_3$ in the consideration set, as shown in the right panel of the table \ref{tab:comp31}. Under the same experiment, only the first attribute is affected. The individual causal effect in the survey environment is given by
$$ICE(E^s)=\sum_{k=1}^2\alpha_{ik}u_{ik}(x_k)-\alpha_{ik}u_{ik}(\tilde{x}_k)$$ where $\tilde{x}_2=x_2$ since the treatment only affects the first attribute. In contrast, in the real-world environment, the individual causal effect is
$$
ICE(E^r)=\sum_{k=1}^2\beta_{ik}u_{ik}(x_k)-\beta_{ik}u_{ik}(\tilde{x}_k)+[\beta_3u_{ik}(x_3)-\tilde{\beta}_3 u_{ik}(\tilde{x}_3)].
$$
We observe that salience affects the individual causal effect through two distinct channels. First, changes in salience weights imply that, even when the treatment does not alter the values of other attributes, those attributes still contribute to the treatment effect via reweighting. In particular, because the real-world environment includes an additional attribute, the salience weights assigned to the first two attributes $(\beta)$ differ from those in the survey environment $(\alpha)$. This reallocation of attention can lead to differences in effect magnitude and, potentially, in effect sign across environments. Second, in the real-world environment, $ICE(E^r)$ includes an additional term arising from the third attribute, $[\beta_3u_{ik}(x_3)-\tilde{\beta}_3 u_{ik}(\tilde{x}_3)]$. This additional component can be sufficiently large to overturn the direction of the treatment effect, thereby inducing effect sign reversal.

\begin{table}[htbp]
    \centering 
    \begin{tabular}{cc|cc|cc|cc}
    \hline
        \hline
    \multicolumn{4}{c}{Survey Experiment Environment}& \multicolumn{4}{c}{Real-world Environment}\\
        \hline
     \multicolumn{2}{c}{Control Status}  & \multicolumn{2}{c}{Treatment Status}  & \multicolumn{2}{c}{Control Status}  & \multicolumn{2}{c}{Treatment Status} \\
   
           Attribute &  Salience &  Attribute &  Salience &  Attribute &  Salience &  Attribute &  Salience \\
          \hline       
          \framebox{$x_1$} & $\alpha_1$ & \framebox{$\tilde{x}_1$} & $\alpha'_1$ &  \framebox{$x_1$} & $\beta_1$ & \framebox{$\tilde{x}_1$} & $\beta'_1$\\
          $x_2$ & $\alpha_2$ & $x_2$ & $\alpha'_2$ &  $x_2$ & $\beta_2$ & $x_2$ & $\beta'_2$ \\
     &  &  & &    $x_3$ & $\beta_3$ & $x_3$ & $\beta'_3$\\
         \hline
         \hline
    \end{tabular}
    \caption{Illustration of Salience Effect.}\label{tab:comp31}
\end{table}

It is immediate that if there exist parameter configurations under which a positive effect can be reversed to a negative one, then it is even more likely that configurations exist under which the effect is attenuated to zero. We formalize this observation in the following corollary:

\begin{corollary}\label{cor:reversal}
Suppose assumptions \ref{ass:stable} and \ref{ass:salience} hold. If there exists $j$ and $k$, such that $r_j(z_i) \neq r_k(z'_i)$, then the experimental treatment effect may become null in the real-world environment, and vice versa.
\end{corollary}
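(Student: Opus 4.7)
The plan is to derive Corollary \ref{cor:reversal} as a continuity consequence of Proposition \ref{prop:reverse2}. The proposition already guarantees that, under the rank-switching hypothesis, there exist parameter configurations in which $ITE^{\text{exp}}$ and $ITE^{\text{real}}$ take opposite signs, whereas a natural baseline---for instance, the setting of Proposition \ref{thm:ampbias} with the excluded utilities $u_{k+1},\ldots,u_m$ set to zero---has them sharing a sign (and differing only by the amplification factor $\delta^{-1}$). Connecting any such reversing configuration to this baseline by a continuous path in parameter space and applying the intermediate value theorem then produces a configuration at which exactly one of the two effects vanishes while the other remains nonzero.

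Concretely, I would first fix the salience ranks $r^j_k$ on both profiles so that the condition $r^j_k(x_1) \neq r^j_k(\tilde{x}_1)$ is preserved throughout the deformation; on the open region of primitive parameter space where these ranks are constant, equation \ref{equ:sal} shows that both $ITE^{\text{exp}}$ and $ITE^{\text{real}}$ are smooth (indeed, affine) functions of the primitives. For the first direction of the corollary---a nonzero experimental effect coexisting with a null real-world effect---I would deform only the utilities of the excluded attributes $u_{k+1},\ldots,u_m$, which enter $ITE^{\text{real}}$ but not $ITE^{\text{exp}}$. Proposition \ref{prop:reverse2} supplies one endpoint at which $ITE^{\text{real}}$ and $ITE^{\text{exp}}$ have opposite signs; the trivial choice $u_{k+1} = \cdots = u_m = 0$ supplies another endpoint at which they agree in sign (reducing to the Proposition \ref{thm:ampbias} baseline); the intermediate value theorem then yields a parameter vector at which $ITE^{\text{real}} = 0$ while $ITE^{\text{exp}} \neq 0$.

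For the reverse direction---a nonzero real-world effect coexisting with a null experimental effect---the argument is symmetric but must deform primitives that move $ITE^{\text{exp}}$ while keeping $ITE^{\text{real}}$ bounded away from zero. Because Assumption \ref{ass:stable} pins the in-experiment salience ratios to the real-world ones, the cleanest choice is to vary the treatment gap $u_1(\tilde{x}_1) - u_1(x_1)$ together with a compensating variation in the excluded utilities that keeps the real-world effect strictly positive (or strictly negative). Continuity plus the intermediate value theorem then again supplies the desired crossing at which $ITE^{\text{exp}} = 0$ but $ITE^{\text{real}} \neq 0$.

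The main obstacle I anticipate is preserving continuity across the deformation: at a rank-boundary configuration the salience weights in \ref{equ:sal} change discretely, so a careless path could jump. I would handle this by taking a sufficiently small neighborhood of the reversing configuration furnished by Proposition \ref{prop:reverse2} and by restricting deformations to primitives that do not enter the salience functions $\sigma(\cdot,\cdot)$---the excluded utilities $u_{k+1},\ldots,u_m$ in particular---so that no rank can switch along the path and both $ITE$ expressions remain continuous. With that restriction, the argument reduces to a routine one-dimensional intermediate value application.
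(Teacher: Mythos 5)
Your continuity/intermediate-value argument is correct and is essentially a rigorous version of the paper's own justification, which only asserts informally that if parameters can flip the sign they can a fortiori drive the effect to zero (the paper supplies no separate formal proof of this corollary). Note that the boundary case $u_3(x_3) = \frac{\tilde{\beta}_2-\beta_2}{\beta_3-\tilde{\beta}_3}u_2(x_2)$ in Case 1 of the paper's proof of Proposition \ref{prop:reverse2} already exhibits the null real-world configuration directly, and your observation that the excluded utilities $u_{k+1},\ldots,u_m$ do not enter the salience functions---so no rank can switch along the deformation---is exactly the point needed to make the intermediate value step legitimate.
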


It is important to emphasize that our results are existence results; they do not imply that effect sign reversal or attenuation must occur in practice. For example, if a particular attribute—or its associated salience—dominates the evaluation process, then reversal is unlikely. The intuition is straightforward. Suppose attribute $X_1$ is substantially more important than all other attributes. In that case, it yield a large utility contribution $u(X_1)$. Consequently, in the overall evaluation $V$, the term $\alpha_1 u(X_1)$ dominates. Even if salience rankings shift, this dominant component is unlikely to be offset by changes in other attributes. As a result, the sign of $V$, and hence of the individual causal effect, will largely be determined by $X_1$, making sign reversal unlikely in such settings.

Another scenario that precludes effect sign reversal or attenuation arises when the rank-change condition is not satisfied. As emphasized in Proposition \ref{thm:ampbias}, variation in salience rankings across treatment states is essential for generating sign reversal. The following proposition formalizes that, in the absence of such rank changes, effect sign reversal cannot occur.

\begin{proposition}\label{prop:noreverse}
Assuming conditions \ref{ass:stable} and \ref{ass:salience} hold. If the relative salience rankings satisfy $r_k(z) = r_k(z')$ for all $k$, then the effect sign reversal cannot occur.
\end{proposition}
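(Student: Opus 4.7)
The plan is to exploit the fact that the salience-adjusted weights in equation \ref{equ:sal} depend on the attribute levels only through the salience ranks $r_k^j$. Concretely, generalizing the two-attribute formula, the weight placed on attribute $k$ in profile $j$ takes the form $\hat{\alpha}_k^j = \frac{\alpha_k \delta^{r_k^j-1}}{\sum_l \alpha_l \delta^{r_l^j-1}}$. If the rank-preservation hypothesis $r_k^j(x_1)=r_k^j(\tilde{x}_1)$ holds for every profile $j$ and attribute $k$, then the adjusted weights are the same in World 1 and World 2, for both the treated profile $A_1$ and the control profile $A_2$. The rest of the proof reduces to plugging this invariance into the difference-in-differences definition of the ITE and observing that all terms corresponding to attributes other than $X_1$ cancel exactly.

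First I would write out the ITE in the experimental setting as the difference between $V^{\text{exp}}(\tilde{A}_1)-V^{\text{exp}}(A_2)$ in World 2 and $V^{\text{exp}}(A_1)-V^{\text{exp}}(A_2)$ in World 1, using the $\hat{\alpha}_k^j$ notation. By rank preservation, $\hat{\alpha}_k^2$ is identical in both worlds, so the two $V^{\text{exp}}(A_2)$ terms cancel. For profile 1, $\hat{\alpha}_k^1$ is also unchanged, and since $\tilde{x}_k = x_k$ for $k \geq 2$, every term in the sum vanishes except the $k=1$ term. This leaves
\begin{equation*}
ITE^{\text{exp}} = \hat{\alpha}_1^1 \bigl[u_1(\tilde{x}_1,\theta_i) - u_1(x_1,\theta_i)\bigr].
\end{equation*}
Repeating the identical argument with the $m$-attribute real-world consideration set, and invoking Assumption \ref{ass:stable} to tie the two salience vectors together, yields $ITE^{\text{real}} = \hat{\alpha}_1^{1,\text{real}}\bigl[u_1(\tilde{x}_1,\theta_i) - u_1(x_1,\theta_i)\bigr]$ for a strictly positive weight $\hat{\alpha}_1^{1,\text{real}}>0$. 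Since both weights are positive and multiply the same utility increment, $ITE^{\text{exp}}$ and $ITE^{\text{real}}$ must share a sign, and taking expectations extends the conclusion to the AMCE.

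The main obstacle is bookkeeping rather than ideas: changing $x_1$ to $\tilde{x}_1$ shifts the reference level $\overline{X}_1 = (x_1+x'_1)/2$, which in principle perturbs the salience function $\sigma(\cdot,\overline{X}_1)$ for \emph{both} profiles and for \emph{all} attributes through their denominators. I need to argue carefully that this perturbation affects only the numerical salience values, not their ranks, so that the hypothesis $r_k^j(x_1) = r_k^j(\tilde{x}_1)$ is exactly what is required to freeze the weights $\hat{\alpha}_k^j$. A secondary subtlety is that the proposition is stated as ``will not occur'' — I should clarify that this rules out strict sign reversal but allows the magnitude (and possibly the null status) of the effect to differ between experiment and real world, consistent with Proposition \ref{thm:ampbias}.
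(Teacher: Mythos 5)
Your proposal is correct and follows essentially the same route as the paper's proof: rank preservation freezes the salience-adjusted weights across the two worlds, the non-treatment terms cancel, and both the experimental and real-world ITEs reduce to a strictly positive weight multiplying the same utility increment $u_1(x_1)-u_1(\tilde{x}_1)$, forcing a common sign. Your explicit general formula $\hat{\alpha}_k^j=\alpha_k\delta^{r_k^j-1}/\sum_l \alpha_l\delta^{r_l^j-1}$ and your remark about the shifted reference level only affecting salience values (not ranks) merely make explicit what the paper leaves implicit.
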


\subsection{Empirical Evidence}\label{sec:test_reversal}

The preceding propositions illustrate how salience can generate effect sign reversal or effect attenuation. We cannot directly test the theory, because neither the real-world consideration set nor the exact salience weights are observed or experimentally controlled. What we can do, instead, is derive testable implications. As before, our goal is not to “prove” the model.

Although we cannot manipulate the real-world consideration set, we do expect to substantially influence the consideration set in the survey environment. Building on Proposition \ref{prop:reverse2} and Corollary \ref{cor:reversal}, we therefore derive the following testable implications:

\begin{hypothesis}\label{hypo:reversal1}
 The sign of the effect in a conjoint experiment may reverse or attenuate to null as the number of attributes increases.
\end{hypothesis}

The proposition also highlights that changes in salience rankings are necessary for this phenomenon to arise. This raises the question: under what conditions can we expect salience rankings to remain unchanged, as assumed in the preceding proposition? The mechanism underlying salience implies that rankings are determined by the salience function $\sigma(x_k,\overline{X}_k)$. If the treatment-induced change in the attribute $X_1$ is not sufficiently large to meaningfully alter the reference point $\overline{X}_k$, then $\sigma(x_k,\overline{X}_k)$ will be close to $\sigma(\tilde{x}_k,\overline{X}_k)$. In such cases, the change in salience is too small to affect the relative ranking of attributes. Motivated by this intuition, we propose the following implication:

\begin{hypothesis}\label{hypo:reversal2}
Attribute effect sign reversal or attenuation is less likely when changes in attribute levels are marginal.
\end{hypothesis}

We draw on data from a conjoint experiment on hotel rooms conducted by \citet{bansak2021beyond}. The study identifies four core attributes of hotel rooms: “view from the room (ocean or mountain view), floor (top, club lounge, or gym and spa floor), bedroom furniture (1 king bed and 1 small couch or 1 queen bed and 1 large couch), and the type of in-room wireless internet (free standard or paid high-bandwidth wireless).” In addition to these core attributes, the authors include 18 supplementary attributes that are largely unrelated to the core set.

Respondents were asked to choose their preferred hotel room from 15 paired comparisons, where each profile included the four core attributes along with a randomly selected subset of additional attributes. As a result, respondents were randomly assigned to one of 11 experimental conditions, with profiles containing 4, 5, 6, 7, 8, 9, 10, 12, 14, 18, or up to 22 attributes.

Figure \ref{fig:reversal} presents a heatmap of the results.\footnote{Figure \ref{fig:sign_pvalue} in the SI reports statistical significance. Due to limited statistical power, relatively few estimates are statistically significant. As the sample size increases, confidence intervals would be expected to narrow, potentially revealing additional cases of statistically significant effect reversals. At the same time, null effects are also consistent with our theoretical predictions and align with implication \ref{hypo:reversal1}.} The horizontal axis indicates the number of attributes, while the vertical axis corresponds to attribute levels. Darker colors denote negative AMCEs, whereas lighter colors indicate positive AMCEs. The results provide support for our hypothesis on effect sign reversal (implication \ref{hypo:reversal1}). In particular, the estimated effects of some attributes—such as menu, bar, closet, and pillow—exhibit substantial instability across conditions. By contrast, other attributes, including View, Towels, and Internet, display considerable stability. This pattern is consistent with our theoretical framework: these attributes likely carry high baseline salience and utility, allowing them to dominate the evaluation process and remain robust to changes in the attribute set. This evidence also reinforces the classification of View and Internet as core attributes, as emphasized by \citet{bansak2021beyond}.

To test implication \ref{hypo:reversal2}, we adapted our candidate-choice experimental design. To better control for salience, rather than randomly assigning attribute levels across profiles, we constrained level differences to be minimal. For example, for the age attribute with five levels (40, 52, 60, 68, 75), only adjacent values were allowed to appear in a comparison. Thus, if one profile was assigned age 52, the other profile could only take values 40 or 60.

As in the previous experiment, this study was fielded to 1,200 U.S. respondents via Lucid in November 2023. Each respondent completed five paired choice tasks. Table \ref{tab:revsal} reports the AMCEs for gender under both the original design and the modified, reduced-salience design. When attribute levels were assigned randomly—without controlling for salience—the corresponding AMCEs (Column 2) exhibit a sign reversal as the number of attributes increases from six to eight. By contrast, Column 3 presents results from the reduced-salience design. A key observation is the absence of a clear effect sign reversal under this specification. We emphasize that this finding does not imply that researchers should universally restrict conjoint designs to adjacent attribute levels. Rather, this design choice is intended as a targeted test of the theoretical mechanism. 

\begin{table}[ht]
\centering
\begin{tabular}{ccc}
  \hline 
  \hline
  Num  & Salience Effect & Reduced Salience\\ 
  \hline
 2 &  0.15 & 0.45 \\ 
   4 &  0.10 & 0.10 \\ 
   6 & 0.03 & 0.15 \\ 
   8 &  \framebox{-0.08} & 0.10 \\ 
   10 &   \framebox{-0.02} & 0.27 \\ 
   \hline
   \hline
\end{tabular}
\caption{Gender Effect. Full results are in the Table \ref{tab:full} and \ref{tab:full1}}\label{tab:revsal}
\end{table}

\begin{figure}[!htp]
    \centering
    \includegraphics[scale=0.7]{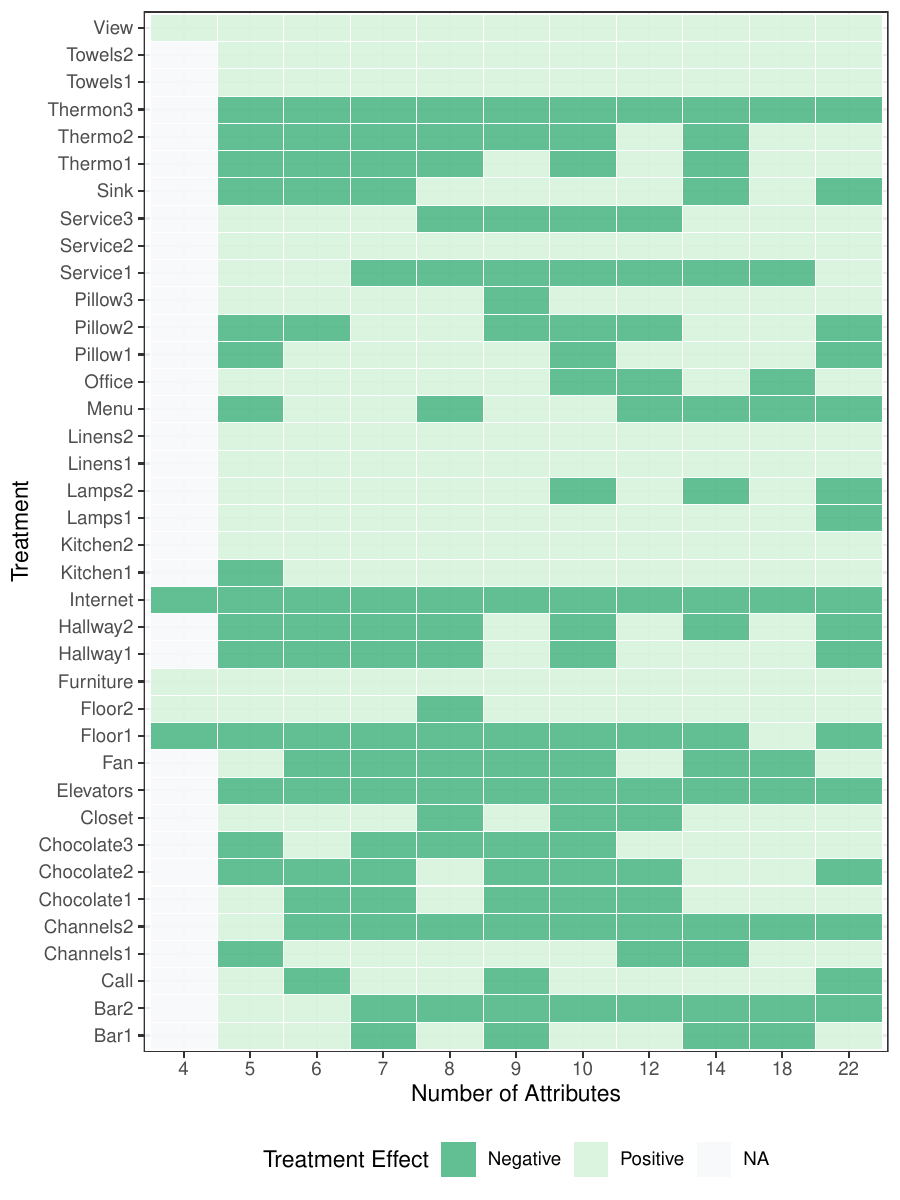}
    \caption{Testing the implication of effect sign reversal. The horizontal axis denotes the number of attributes, and each row represents the effect sign for a given attribute. Dark colors indicate negative effects, while light colors indicate positive effects. The 385 estimates are obtained from the same model specification as in \citet{bansak2021beyond}.}
    \label{fig:reversal}
\end{figure}


We emphasize again that our results are not claims about inevitability or frequency. We do not assert that such inconsistencies must occur, nor do we quantify how often they arise in practice. Rather, our goal is to illustrate a plausible mechanism. Other mechanisms—and potentially offsetting countervailing forces—may also be at work.

Indeed, based on our theory, relative to other types of survey experiments, well-designed conjoint experiments may be less susceptible to generalizability concerns. Because they incorporate multiple attributes within a single design, they substantially reduce the likelihood of omitting attributes that are important in real-world decision-making. The robustness of conjoint experiments has been widely discussed in the literature \citep{jenke2021using,hainmueller2015validating}.

\section{Discussion and Concluding Remarks}\label{sec:discuss}

An increasing number of studies in the social sciences employ experiments to identify causal effects (\citealt{druckman2006growth}). While such designs provide strong internal validity, external validity remains a longstanding concern. This paper develops a formal framework to explain why survey experiments may fail to satisfy this stronger notion of generalizability, even when treatment is randomized and even when the same individuals, treatment content, and outcome measures are held fixed. This limitation raises important concerns about the extent to which experimental findings can be extrapolated to real-world settings. For example, \citet{boas2019norms} show that although voters appear to sanction corruption information in survey experiments, they do not take action when presented with similar information about their own mayor in a field setting.

We provide both theoretical and empirical evidence demonstrating how two mechanisms—limited attention and salience effects—can undermine the generalizability of survey experiments. In particular, we show that experimental effects may be systematically amplified in magnitude and, in some cases, may even diverge in direction from their real-world counterparts.

A caveat to our findings is that we do not claim that non-generalizability is inevitable in survey experiments. Many studies—for example, \citet{jenke2021using}—as well as empirical evidence presented in this paper, suggest that certain attributes exhibit stability across contexts in conjoint experiments. Our theoretical framework also highlights that conjoint experiments possess particular advantages for achieving high levels of generalizability.



It is important to emphasize that salience effects are not inherently detrimental to experimental design. Whether individuals make decisions in real-world contexts or within experimental settings, salience is an integral component of multi-dimensional decision-making. Accordingly, the goal should not be to eliminate salience effects altogether. Rather, the objective is to ensure that the salience patterns induced in the experiment closely mirror those that arise in real-world environments. It is the artificially induced salience distortions—those that diverge from real-world conditions—that are most problematic.

Notably, in the absence of attention distortions—i.e., if the experimental environment perfectly replicates the real-world decision context—experimental ICEs would coincide with their real-world counterparts, precisely because salience patterns would align across environments. The challenge, of course, is that achieving such equivalence is rarely feasible in practice.

A more practical approach is to design attribute levels and profile combinations that closely approximate real-world scenarios. This calls for a phased research design that emphasizes realism, relevance, and precision. A natural starting point is a comprehensive review of the existing literature and available data on the decision-making context of interest. This preliminary step should be complemented by exploratory interviews or pilot surveys with individuals who engage in similar decisions in real-world settings. Such qualitative and descriptive evidence is essential for identifying the attributes and attribute levels that are most salient and substantively relevant to the decision-making process.

\newpage
\begin{spacing}{0.0}
	\bibliographystyle{apsr}
	\bibliography{cite1}
\end{spacing}


\clearpage
\setcounter{page}{1}

\appendix
\addcontentsline{toc}{section}{Appendix} 
\part{Supplementary Information} 
\parttoc 

\setcounter{figure}{0}
\setcounter{table}{0}
\renewcommand\thefigure{A.\arabic{figure}}
\renewcommand\thetable{A.\arabic{table}}

\clearpage

\section{Formal Definition of Causal Targets of Conjoint Experiment}\label{si:conjoint}

Consistent with standard practice, the causal effects are understood in terms of potential outcomes in a hypothetical scenario. We use Table \ref{tab:ite} to illustrate the concept. Suppose we aim to define the ICE for attribute $X_1$ when it changes from $x_1$ to $\tilde{x}_1$, while holding the other attributes constant. This involves considering two hypothetical scenarios.

\begin{table}[!h]
    \centering
    \begin{tabular}{ccc}
    \hline
    & \multicolumn{2}{c}{$World 1$}  \\
         &  $A_1$ &  $A_2$ \\
          \hline
      $X_1$ &  \framebox{$x_1$} & $x'_1$\\
      $X_2$ &   $x_2$ & $x'_2$\\
      $X_3$ &   $x_3$ & $x'_3$\\
         \hline
    \end{tabular}
    \quad 
\begin{tabular}{cc}
    \hline
     \multicolumn{2}{c}{$World 2$}  \\
           $\tilde{A}_1$ &  $A_2$ \\
          \hline
         \framebox{$\tilde{x}_1$} & $x'_1$ \\
         $x_2$ & $x'_2$\\
         $x_3$ & $x'_3$\\
         \hline
    \end{tabular}
     \caption{Illustration of Multi-dimensional Decisions. $A_j$ is the alternatives and $X_k$ is the attribute.}\label{tab:ite}
\end{table}

In the first scenario (World 1), decision makers (DMs) are presented with a choice between two candidates, $A_1$ and $A_2$. Each candidate is characterized by three attributes: $X_1, X_2$, and $X_3$. The realized values of these attributes for $A_1$ and $A_2$ are the vectors $(x_1, x_2, x_3)$ and $(x'_1, x'_2, x'_3)$, respectively. For example, if $X_1$ represents gender, with $x_1$ indicating female and $x'_1$ indicating male, the evaluations for the two candidates in World 1 are $V_i(A_1)$ and $V_i(A_2)$.

In the second scenario (World 2), the DM is presented with the same $A_2$, but for $A_1$, the attribute $X_1$ is realized as $\tilde{x}_1$ instead. In other words, all attribute realizations remain the same except for $X_1$. We thus use $\tilde{A}_1$ to differentiate this modified version of $A_1$ from its original. The potential evaluations in World 2 become: $V_i(\tilde{A}_1)$ and $V_i(A_2)$.

In practice, most conjoint experiments measure the forced choice outcome $Y$; that is, whether a candidate is chosen or not, denoted by $1$ or $0$. Theoretically, the DM chooses candidate $A_1$ over $A_2$ because the DM has a higher evaluation for the former. Because the utility difference drives the potential outcome we observe, without loss of generality, throughout the main text, we focus on $V$ rather than $Y$ \footnote{It is evident that the binary choice potential outcome can be defined as $Y_i^{j} = \mathbbm{1}[V^j_i \ge 0]$, where $j = 1, 2$. For outcomes that are preference scores, our results also hold because the score is also a function of the evaluation. In some survey experiments, the DM may only observe one alternative. In such cases, we can simply set $V_i(A_2) = 0$.}. Therefore, $Y$ is a function of the utility difference. The corresponding utility differences in World 1 are $V^1_i = V_i(A_1) - V_i(A_2)$, and in World 2, they are $V^2_i = V_i(\tilde{A}_1) - V_i(A_2)$ \footnote{Note that $V_i(A_2)$ in the two hypothetical worlds may differ due to the salience effect, which we will discuss in Section \ref{sec:salience}.}. Consequently, the individual causal effect for attribute $X_1$ when it changes from $x_1$ to $\tilde{x}_1$, given the other attributes remain constant, is defined as the difference-in-differences $V^1_i - V^2_i$. 

Our previous difference-in-differences $V^1_i - V^2_i$ corresponds to the individual component effect used in the conjoint experiment literature. See \citet{abramson2023detecting} for details.

To gain intuition regarding the phenomenon of Effect Sign Reversal, consider two hypothetical worlds, as depicted in the left panel of Table \ref{tab:comp3}. In Experiment 1, we compare $B_1$ and $B_2$, each with three attributes. In Experiment 2, the comparison involves two-attribute profiles $A_1$ and $A_2$. We are particularly interested in the treatment effect of attribute $X_1$ as it changes from $x_1$ to $\tilde{x}_1$. Consequently, $B_2$ (and $A_2$), serving as the controlled profiles, are fixed at $X^c=(x'_1,x'_2,x'_3)$ (and ($x'_1,x'_2$)) respectively.

We use $\beta_k$ ($\alpha_k$) to denote the salience of each attribute $x_k$ when the DM evaluates these alternatives. The corresponding sets of salience are illustrated in the right panel of Table \ref{tab:comp3}. Assume the existence of prior salience for each attribute, denoted by $\beta=(\beta^0_1,\beta^0_2,\beta^0_3)$ and $\alpha=(\alpha^0_1,\alpha^0_2)$. Consider how salience is formed and evolves as the DM observes the realized attributes during comparison. When an individual compares two profiles, differing levels of each attribute will ``distort" the original salience based on the rule previously mentioned. Specifically, in World 1, when comparing $B_1=(x_1,x_2,x_3)$ to $B_2=(x'_1,x'_2,x'_3)$, for each attribute $j$, we compute the salience function $\sigma_k(\cdot,\frac{x_k+x'_k}{2})$. The original salience $\beta^0$ is then discounted by $\delta^{r_k-1}$, where $r_k$ represents the relative salience rank introduced earlier. We assume that the updated salience attached to $B_1$ follows $\beta_1>\beta_2>\beta_3$ and the salience attached to $B_2$ follows $\beta'_1>\beta'_2>\beta'_3$.

In the hypothetical World 2, only the level of attribute 1 in the treatment group is altered, from $x_1$ to $\tilde{x}_1$. This modification affects the reference level for attribute $X_1$ and, consequently, the value of the salience function $\sigma_1$ for attribute 1. For instance, the updated salience attached to $\tilde{B}_1$ is now $\tilde{\beta}_2>\tilde{\beta}_1>\tilde{\beta}_3$; in other words, the salience of attribute 1 is now less than that of attribute 2. For simplicity, we assume that the salience values for the controlled profile remain the same as in the first comparison.

\begin{table}[]
    \centering 
    \begin{tabular}{cc|cc}
    \hline
    \multicolumn{4}{c}{Experiment 1}\\
     \multicolumn{2}{c}{World 1}  & \multicolumn{2}{c}{World 2} \\
   
           $B_1$ &  $B_2$ &  $\tilde{B}_1$ &  $B_2$ \\
          \hline
          
          \framebox{$x_1$} & $x'_1$ & \framebox{$\tilde{x}_1$} & $x'_1$\\
          $x_2$ & $x'_2$ & $x^t_2$ & $x'_2$ \\
         $x_3$ & $x'_3$ & $x^t_3$ & $x'_3$\\
         \hline
    \end{tabular}
    \quad 
    \begin{tabular}{cc|cc}
    \hline
     \multicolumn{4}{c}{Experiment 1}\\
     \multicolumn{2}{c}{ World 1}  & \multicolumn{2}{c}{ World 2} \\
    
           $\beta_1$ &  $\beta_2$ &  $\tilde{\beta_1}$ &  $\beta_2$ \\
          \hline
          
          $\beta_1$ & $\beta'_1$ & $\tilde{\beta}_1$ & $\beta'_1$\\
          $\beta_2$ & $\beta'_2$ & $\tilde{\beta}_2$ & $\beta'_2$ \\
         $\beta_3$ & $\beta'_3$ & $\tilde{\beta}_3$ & $\beta'_3$\\
                 \hline
    \end{tabular}

     \begin{tabular}{cc|cc}
    \hline
    \multicolumn{4}{c}{Experiment 2}\\
     \multicolumn{2}{c}{ World 1}  & \multicolumn{2}{c}{ World 2} \\
   
           $A_1$ &  $A_2$ &  $\tilde{A}_1$ &  $A_2$ \\
          \hline
          
          \framebox{$x_1$} & $x'_1$ & \framebox{$\tilde{x}_1$} & $x'_1$\\
          $x_2$ & $x'_2$ & $x^t_2$ & $x'_2$ \\
         \hline
    \end{tabular}
    \quad 
    \begin{tabular}{cc|cc}
    \hline
     \multicolumn{4}{c}{Experiment 2}\\
     \multicolumn{2}{c}{ World 1}  & \multicolumn{2}{c}{ World 2} \\
    
           $\alpha_1$ &  $\alpha_2$ &  $\tilde{\alpha_1}$ &  $\alpha_2$ \\
          \hline
          
          $\alpha_1$ & $\alpha'_1$ & $\tilde{\alpha}_1$ & $\alpha'_1$\\
          $\alpha_2$ & $\alpha'_2$ & $\tilde{\alpha}_2$ & $\alpha'_2$ \\
                 \hline
    \end{tabular}
    \caption{Illustration of Salience Effect. $A_j$ and $B_j$ are alternatives, $x_k$ denotes the attribute, and $\alpha_k$ and $\beta_k$ denotes thesalience.}\label{tab:comp3}
\end{table}

For Experiment 1, the utility differences in hypothetical World 1 are given by $\sum_{k} \beta_k u_k(x_k) - \sum_k \beta'_k u_k(x'_k)$, and in hypothetical World 2, they are $\sum_{j} \tilde{\beta}_k u_k(\tilde{x}_k) - \sum_k \beta'_k u_k(x'_k)$, where $\tilde{x}_2=x_2$ and $\tilde{x}_3=x_3$. Consequently, the individual treatment effect of attribute $X_1$ can be expressed as:
$$
\begin{aligned}
    ICE_3 &=\sum_{k=1}^3 [\beta_k u_k(x_k)-\tilde{\beta}_k u_k(\tilde{x}_k)]\\
    &=\sum_{k=1}^2 [\beta_k u_k(x_k)-\tilde{\beta}_k u_k(\tilde{x}_k)] + [\beta_3 u_3(x_3)-\tilde{\beta}_3 u_3(\tilde{x}_3)]
\end{aligned}
$$ where the subscript $3$ indicates that a total of three attributes are considered in Experiment 1. Similarly, the individual treatment effect for Experiment 2 is
$$
ICE_2 = \sum_{k=1}^2 [\alpha_k u_k(x_k)-\tilde{\alpha}_k u_k(\tilde{x}_k)]
$$

It is evident that $ICE_3$ includes an additional term $[\beta_3 u_3(x_3) - \tilde{\beta}_3 u_3(\tilde{x}_3)]$ compared to $ICE_2$. Furthermore, the salience values denoted by $\beta$ in $ICE_3$ differ from those denoted by $\alpha$ in $ICE_2$. These two factors could lead to $ICE_3$ and $ICE_2$ having different signs.

\begin{figure}[!htp]
    \centering
    \includegraphics[scale=0.8]{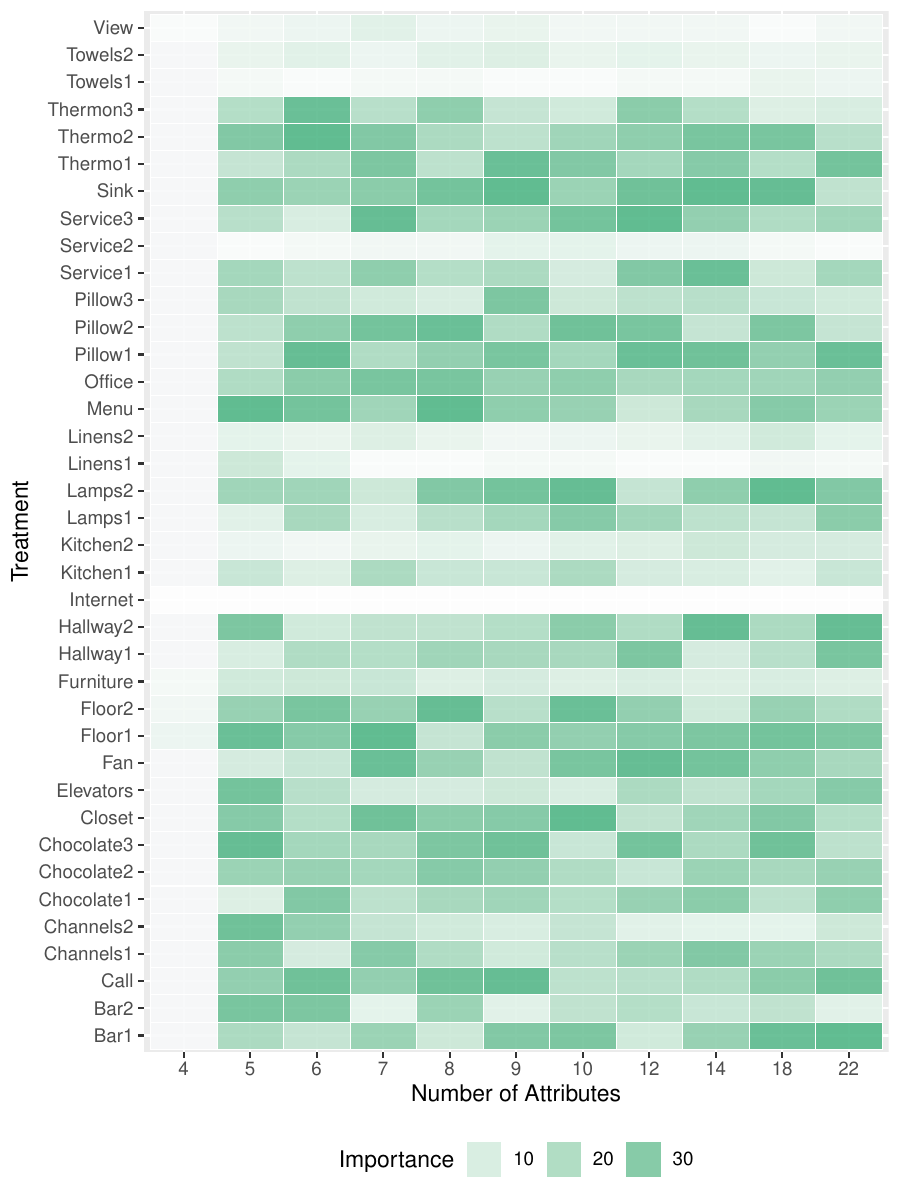}
    \caption{Testing Hypothesis of Importance Reversal. The horizontal line denotes the number of attributes and each row illustrates the relative importance of each attribute. The darker color represents less importance. The total 385 estimates are from the same model in \citet{bansak2021beyond}.}
    \label{fig:rsqu}
\end{figure}

 \label{si:formulation}

\section{Design of Candidate-Choice Experiment}\label{si:design}

This study is pre-registered with the center for Open Science Framework (OSF). An anonymized version along with the questionnaire can be found at the \href{https://osf.io/4v59z/?view_only=df0f02c8cf3e4e6bbd072460c421bfdd}{OSF}. The survey experiment received full Institutional Review Board (IRB) approval in Study ID [details removed for anonymous peer review] from the authors' institutions.

\subsection{Principles for Human Subjects Research}
We recruit subjects using the online survey platform, Lucid Theorem, which  manages relationships with suppliers who handle incentives to participants directly. Researchers pay Lucid a cost per (currently at \$1.50 for a 10--15 minute survey) for a completed interview (CPI) and Lucid pays suppliers who then provide a portion of those
earnings to participants in the form of cash, gift cards, or loyalty reward points. Lucid Theorem uses a proportional sampling method to provide nationally representative samples, balancing participants based on age, gender, ethnicity, and region. 

Each solicited respondent received a link from the company that redirected them to the actual survey and the consent information, hosted on a server maintained by Qualtrics. Upon completing the survey, respondents are immediately redirected back to the company’s website to claim their reward. All participants received compensation after completing the questionnaires.

In the beginning of the survey, we provided a consent form that laid out clear and comprehensive information about the project, including how data will be collected, used, and stored, the costs and benefits of participation, and contact information for the lead researcher and university IRB. Respondents were also informed that no identifying information would be collected, and that they were able to opt out of the research at any time. Each participant was required to read the consent form before proceeding with the survey (at which point consent was assumed to be granted). 

The project did not involve any deception, nor did we intervene in any political processes. The data were obtained in compliance with all relevant regulations and ethical guidelines. The raw data is handled exclusively by the authors, following strict protocols to ensure confidentiality and data security.

\subsection{Pre-registered Design}

The online anonymized version along with the questionnaire can be found at the \href{https://osf.io/4v59z/?view_only=df0f02c8cf3e4e6bbd072460c421bfdd}{OSF}. 

This study explores how the causal effects derived from conjoint experiments can be inconsistent with the real-world effect due to attention and salience biases. Using a series of candidate choice conjoint experiments, we will randomly vary the numbers of attributes (attention bias) and their levels (salience bias) presented to the respondents and examine the differences in the average marginal component effects (AMCEs) of two key features--gender and age--common across all experimental groups. We expect that the AMCEs will be smaller the more number of attributes are included and larger when the levels of the attributes are more salient.

We conduct two candidate-choice experiment to test hypotheses outlined in the main text. The sample sizes for both experiments were 1,200, which is determined by our research budget.

\textbf{Study Information}

Hypotheses

H1: The AMCEs of a common attribute will be smaller the more number of attributes are included in the conjoint experiment

H2: The AMCEs of a common attribute will be larger when the levels of the attributes are more salient in the conjoint experiment.

\textbf{Part 1:}

In the firs experiment, respondents see a conjoint table with a hypothetical candidate that is described by K (=2,4,6,8,10) attributes as shown in the example table below:

\begin{figure}
    \centering
    \includegraphics{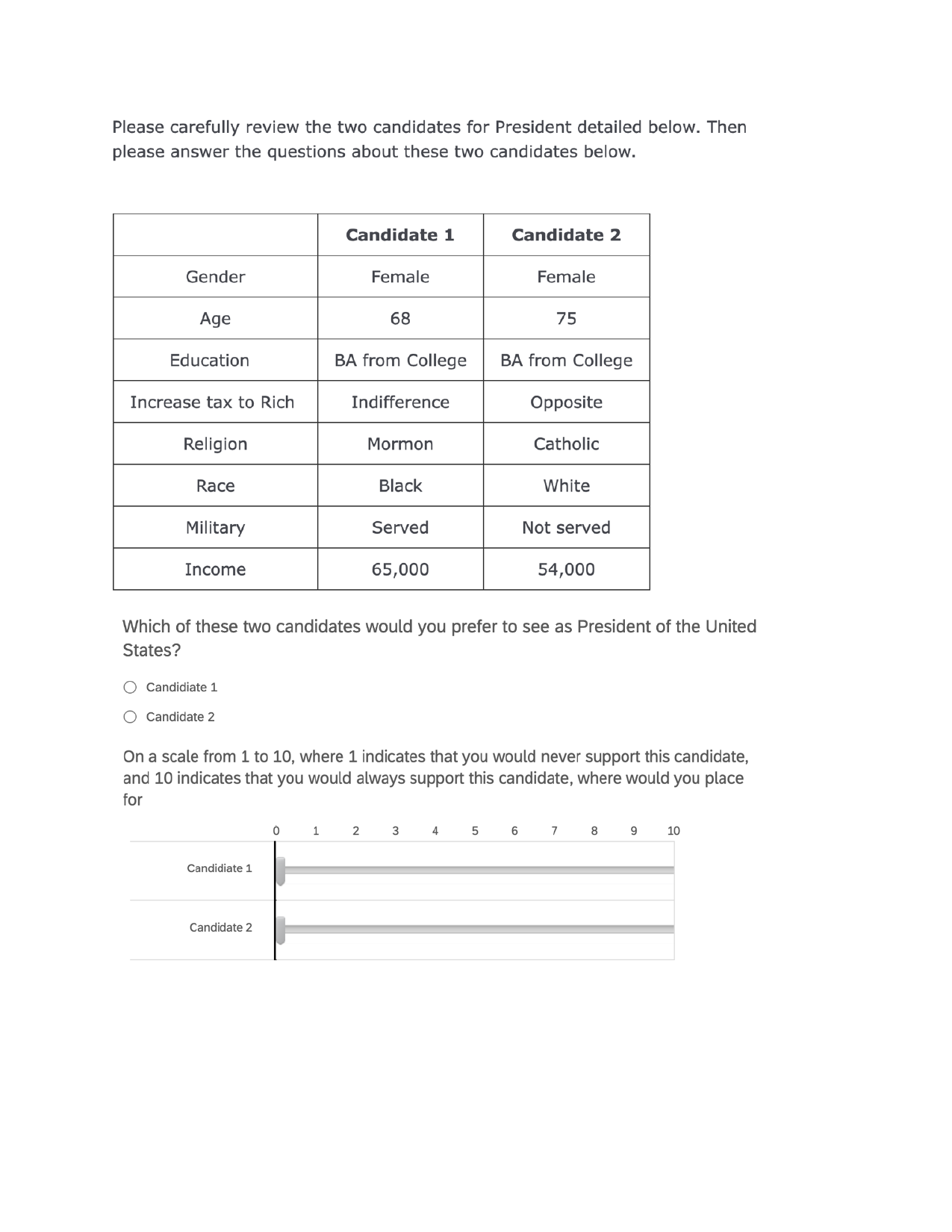}
    \caption{Caption}
    \label{fig:exp_design1}
\end{figure}

We randomly assign participants into one of five groups. For each attribute, levels are also randomly assigned. Each participant completed six rounds of the candidate choice experiment. 

\begin{itemize}
    \item Group 1: each hypothetical candidate has K=2 attributes including Gender and Age; there are total of 6 rounds for each participant
    \item Group 2: each hypothetical candidate has K=4 attributes including Gender, Age, Education, and Tax; there are total of 6 rounds for each participant
    \item Group 3: each hypothetical candidate has K=6 attributes including Gender, Age, Education, Tax, Race, and Income; there are total of 6 rounds for each participant
    \item Group 4: each hypothetical candidate has K=8 attributes including Gender, Age, Education, Tax, Race, Income, Religion, and Military service; there are total of 6 rounds for each participant
    \item Group 5: each hypothetical candidate has K=10 attributes including Gender, Age, Education, Tax, Race, Income, Religion, Military service, Gay Marriage, and Children; there are total of 6 rounds for each participant
\end{itemize}

Attribute level:

\begin{figure}
    \centering
    \includegraphics[width=0.9\textwidth]{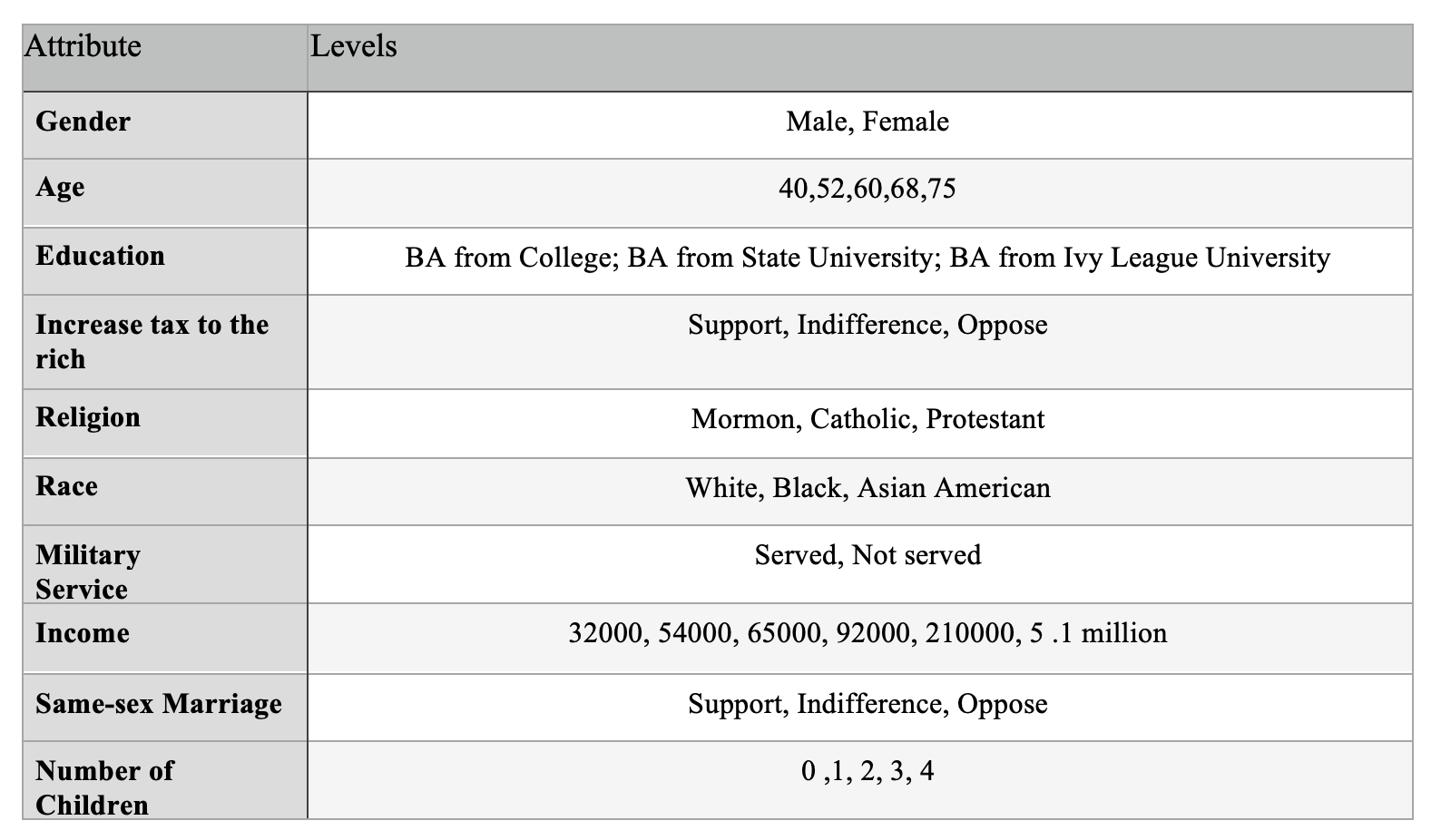}
    \caption{Attribute Table}
    \label{fig:table_attribute}
\end{figure}

\textbf{Part 2:}

In the second experiment, respondents are required to complete 5 rounds of candidate choice experiment similar to the first experiment. However, we control attribute salience in each pair. In particular, we let some attributes be randomly chosen so that their levels are close to each other. For example, for attribute age = [40,52,60,68,75], only levels (40,52) , (52,60), (60,68), and (68,75) will be randomly chosen. Those levels are age, income, same sex marriage, tax, education, and children.

\begin{itemize}
    \item If K=2, attributes are Gender and Age
 \item If K=4, attributes are Gender, Age, Tax, and Education.
 \item If K=6, attributes are Gender, Age, Tax, Education, Race, and Income.
 \item If K=8, attributes are Gender, Age, Tax, Education, Race, Income, Military, and Religion.
 \item If K=10, attributes are Gender, Age, Tax, Education, Race, Income, Military, Religion, Same Sex Marriage, and Children.
\end{itemize}

\textbf{Analysis:}

(1)	We will calculate the average marginal component effect (binary and 10-point rate), and  $R^2$ for each attribute in each K group. We also present both results with and without covariate adjustment. 

(2)	Check the relationship between AMCE and the number of attributes K by linear regression and meta-regression.

(3)	Check the relationship between the treatment effect direction (positive, negative) and the number of attributes K by drawing a heatmap.


(4)	Check the relationship between the treatment effect direction and the number of attributes K conditional on whether levels are salient or not.

\section{Proof of Proposition \ref{thm:ampbias}}







\begin{proof}

Recall, without loss of generality, we assume that the real-world consideration set is $C_I^{r} = \{X_1, X_2, ..., X_m\}$, the associated baseline salience is denoted as $(\alpha_{i1},...,\alpha_{im})$. The experimental consideration set is $C_i^{s} = \{X_1, X_2, ..., X_k\}$ and baseline salience is $(\alpha'_{i1},...,\alpha'_{ik})$. We assume the cardinality of $X^{r}$, denoted by $m$, is greater than that of $X^{e}$, denoted by $k$ ($|C_i^{r}| = m > |C_i^{s}| = k$).

Recall, $D^i=\{j\in[1,k]|u_{ij}(X_{ij}(z)) \neq u_{ij}(X_{ij}(z'))\} \neq \emptyset$ is the set of attribute index that treatment has effects on individual $i$.

Under treatment $Z$, the outcome is $V_i(z,E)=\sum_{j \in C_i\cap D_i}\alpha_{ij}u_{ij}(X_{ij}(Z))$. Therefore, the ICE in the survey experiment is 
$$
ICE(z,z';E^s)=\sum_{j \in C^s_i \cap D_i}\alpha_{ij}(u_{ij}(X_{ij}(z))-u_{ij}(X_{ij}(z'))).
$$
Note that treatment cannot affect the attributes that are not in the $C^s_i$. By assumption \ref{ass:attention}, $C^r_i \cap C^s_i=\cap C^s_i$. Therefore, the ICE in the real world is 
$$
ICE(z,z';E^r)=\sum_{j \in C^s_i \cap D_i}\alpha'_{ij}(u_{ij}(X_{ij}(z))-u_{ij}(X_{ij}(z'))).
$$

Fix vectors $(\alpha_{ij})_{j\in C_i\cap D_i}$ and $(\alpha'_{ij})_{\in C_i\cap D_i}$, where some $\alpha_{ij} \neq \alpha'_{ij}$ for some $j\in C_i\cap D_i$. Because $u_{ij}(X_{ij}(z))-u_{ij}(X_{ij}(z'))$ are not constrained, we treat them as a random draw from an closed interval, and thus the first result holds.

Under assumption \ref{ass:salience}, fixing a $j \in \{ C_i\cap D_i\}$, denote $\delta_{jk}=\frac{\alpha_{ik}}{\alpha_{ij}}=\frac{\alpha'_{ik}}{\alpha'_{ij}}$ for any $k \in \{C_i\cap D_i\}$. Therefore,
$$
ICE(z,z';E^s)= \alpha_{ij}\sum_{k \in C^s_i \cap D_i}\delta_{jk} \Delta u_{ik}
$$ and
$$
ICE(z,z';E^r)= \alpha'_{ij}\sum_{k \in C^s_i \cap D_i}\delta_{jk} \Delta u_{ik}
$$ where we use $\Delta u_{ik}$ to denote $u_{ij}(X_{ij}(z))-u_{ij}(X_{ij}(z'))$.

By assumption \ref{ass:salience} again, $\alpha'_{ij}=\alpha_{ij} (1-\sum_{j=k+1}^m\alpha'_{ij} )$. Therefore,
$$
\frac{ICE(z,z';E^s)}{ICE(z,z';E^r)}=\frac{1}{\sum_{j=1}^k \alpha'_{ij}}
$$

Now, we have shown results based on ICE. ACE holds because it is a weighted average of ICE.
\end{proof}

\section{Proof of Proposition \ref{prop:reverse2}}
\begin{proof}
We show existence for the case in which $|C_i^s|=2$ and $|C_i^r|=3$. The general case follows the same logic. Suppose there are two attributes in the consideration set of the survey experiment. Under treatment status $Z_i=z$, we denote the vector by $x_1=(u_{i1}(X_{i1}(z)),u_{i2}(X_{i2}(z)),0)$, and the corresponding salience is $a=(\alpha_{i1},\alpha_{i2},0)$. Under treatment $Z_i=z'$, we write $x'_1=(u_{i1}(X_{i1}(z')),u_{i2}(X_{i2}(z')),0)$, with corresponding salience $a'=(\alpha'_{i1},\alpha'_{i2},0)$. Under the assumption, $r_1(z_i) \neq r_1(z'_i)$, and therefore $a \neq a'$. Without loss of generality, we assume that the ICE is positive:
\begin{equation}\label{sieq1}
    ICE(E^s)=a\cdot x_1-a'\cdot x'_i >0
\end{equation}

    Similarly, suppose there are three attributes in the consideration set of the real-world experiment. We define $x_2=(u_{i1}(X_{i1}(z)),u_{i2}(X_{i2}(z)),u_{i3}(X_{i3}(z)))$, with corresponding salience $b=(\beta_{i1},\beta_{i2},\beta_{i3})$. We define $x'_2$ and $b'$ analogously. Therefore, we want to show that
\begin{equation}\label{sieq2}
ICE(E^r)=b \cdot x_2 - b' \cdot x'_2<0
\end{equation}

Note that $x_2=x_1+u_3$ and $x'_2=x'_1+u_3$, where $u_3=(0,0,u_{i3}(x_{i3}))$.

By stable salience, we define $k=\frac{\beta_{i1}}{\alpha_{i1}}$ and $k'=\frac{\beta'_{i1}}{\alpha'_{i1}}$.

Suppose that $k>k'$. Then we can write
$$
b=ka+b_3 \text{ and } b'=k'a'+b'_3 
$$
where $b_3=(0,0,\beta_{i3})$ and $b'_3=(0,0,\beta'_{i3})$. Substituting these into equation $\eqref{sieq2}$, we obtain
$$
\begin{aligned}
    (ka+b_3)\cdot (x1+u_3) - (k'a'+b'_3)\cdot (x'_1+u_3) & <0 \\
    ka \cdot x_1 - k'a' \cdot x'_1 +(\beta_{i3}-\beta'_{i3}) \cdot u_{i3}(x_{i3}) & <0
\end{aligned}
$$
Note that, by stable salience, we have $\beta_{i3}=1-k$ and $\beta'_{i3}=1-k'$. Substituting these into the inequality above, we obtain
$$
k(a\cdot x_1-u_{i3}(x_{i3}))<k'(a'\cdot x'_1-u_{i3}(x_{i3}))
$$
Also, combining this with equation \eqref{sieq1}, we conclude that any $u_{i3}(x_{i3})$ satisfying
$$
a' \cdot x'_1 < a \cdot x_1<u_{i3}(x_{i3})
$$ 
$$
\frac{a'_1 \cdot x'_1 - u_{i3}(x_{i3})}{a_1 \cdot x_1-u_{i3}(x_{i3}) } >1
$$
can produce effect sign reversal. For $k<k'$, we simply reverse the inequality sign.

As demonstrated in the proof, we have more “parameters” than constraints, and thus there is considerable flexibility in achieving the desired outcome. Identifying even a single scenario that aligns with our hypothesis is sufficient.
\end{proof}

\section{Proof of Proposition \ref{prop:noreverse}}

\begin{proof}
    Following the proof of Proposition \ref{thm:ampbias}, we have
    $$
\frac{ICE(z,z';E^s)}{ICE(z,z';E^r)}=\frac{1}{\sum_{j=1}^k \alpha'_{ij}}
$$

Because $\sum_{j=1}^k \alpha'_{ij}>0$, we have $\operatorname{sign}(ICE(z,z';E^s)) = \operatorname{sign}(ICE(z,z';E^r))$.
\end{proof}

 \label{si:proof}

\section{Tables}

\begin{table}[ht] \centering 
\begin{threeparttable}
  \caption{Gender Effects Meta-regression} 
  \label{tab:meta} 
\begin{tabular}{@{\extracolsep{-1pt}}lccc} 
\\[-1.8ex]\hline 
\hline \\[-1.8ex] 
\\[-1.8ex] & All & USA & Own\\ 
\\[-1.8ex] & (1) & (2) & (3) \\ 
\hline \\[-1.8ex] 
 Number & $-$0.0029$^{***}$ & $-$0.0037$^{***}$ & $-$0.0148$^{*}$\\ 
  & (0.0009) & (0.0037) & (0.0054)\\ 
  Intercept & 0.0459$^{***}$ & 0.0549$^{***}$ & 0.1653$^{**}$\\ 
   & (0.0059) & (0.0071) & (0.0352)\\ 
  \hline
  $R^2$ & 0.21 &0.43 & 0\\
  $\tau^2$ & 0.0003  & 0.0003 & 0\\
\hline 
\hline \\[-1.8ex] 
\end{tabular} 
{{Notes:} $R^2$ denotes the amount of heterogeneity accounted for and $\tau^2$ is the estimated amount of residual heterogeneity. 
 $^{*}$p$<$0.1; $^{**}$p$<$0.05; $^{***}$p$<$0.01.}
\end{threeparttable}
\end{table}

\begin{table}[]
\resizebox{15cm}{!}{
\begin{tabular}{@{\extracolsep{5pt}}lccccc} 
\\[-1.8ex]\hline 
\hline \\[-1.8ex] 
 & \multicolumn{5}{c}{Dependent Variable: Preference Score} \\ 
\cline{2-6} 
\\[-1.8ex] & (1) & (2) & (3) & (4) & (5)\\ 
\hline \\[-1.8ex] 
 GenderMale & 0.154$^{**}$ & 0.098 & 0.031 & $-$0.081 & $-$0.018 \\ 
  & (0.063) & (0.068) & (0.067) & (0.071) & (0.069) \\ 
  & & & & & \\ 
 Age52 & 0.119 & $-$0.041 & 0.032 & 0.004 & $-$0.131 \\ 
  & (0.099) & (0.107) & (0.107) & (0.112) & (0.108) \\ 
  & & & & & \\ 
 Age60 & $-$0.162 & $-$0.119 & $-$0.194$^{*}$ & $-$0.048 & $-$0.110 \\ 
  & (0.101) & (0.107) & (0.106) & (0.111) & (0.109) \\ 
  & & & & & \\ 
 Age68 & $-$0.425$^{***}$ & $-$0.331$^{***}$ & $-$0.273$^{**}$ & $-$0.145 & $-$0.253$^{**}$ \\ 
  & (0.101) & (0.107) & (0.107) & (0.112) & (0.108) \\ 
  & & & & & \\ 
 Age75 & $-$1.340$^{***}$ & $-$0.830$^{***}$ & $-$0.566$^{***}$ & $-$0.551$^{***}$ & $-$0.394$^{***}$ \\ 
  & (0.101) & (0.107) & (0.107) & (0.112) & (0.110) \\ 
  & & & & & \\ 
 EducationBA from Ivy League University &  & 0.048 & $-$0.027 & $-$0.042 & 0.045 \\ 
  &  & (0.083) & (0.082) & (0.087) & (0.084) \\ 
  & & & & & \\ 
 EducationBA from State University &  & 0.076 & 0.001 & $-$0.036 & 0.026 \\ 
  &  & (0.083) & (0.081) & (0.087) & (0.085) \\ 
  & & & & & \\ 
 TaxOppose &  & $-$0.327$^{***}$ & $-$0.374$^{***}$ & $-$0.209$^{**}$ & $-$0.094 \\ 
  &  & (0.083) & (0.082) & (0.087) & (0.085) \\ 
  & & & & & \\ 
 TaxSupport &  & 0.828$^{***}$ & 0.401$^{***}$ & 0.213$^{**}$ & 0.327$^{***}$ \\ 
  &  & (0.083) & (0.081) & (0.087) & (0.084) \\ 
  & & & & & \\ 
 RaceBlack &  &  & 0.137$^{*}$ & 0.116 & 0.037 \\ 
  &  &  & (0.082) & (0.087) & (0.084) \\ 
  & & & & & \\ 
RaceWhite &  &  & 0.066 & 0.163$^{*}$ & 0.078 \\ 
  &  &  & (0.082) & (0.087) & (0.084) \\ 
  & & & & & \\ 
 Income32,000 &  &  & $-$0.080 & 0.009 & $-$0.159 \\ 
  &  &  & (0.117) & (0.123) & (0.119) \\ 
  & & & & & \\ 
 Income5.1 million &  &  & $-$0.347$^{***}$ & $-$0.085 & $-$0.179 \\ 
  &  &  & (0.116) & (0.124) & (0.120) \\ 
  & & & & & \\ 
 Income54,000 &  &  & $-$0.030 & 0.089 & 0.075 \\ 
  &  &  & (0.116) & (0.123) & (0.119) \\ 
  & & & & & \\ 
 Income65,000 &  &  & $-$0.068 & 0.252$^{**}$ & 0.007 \\ 
  &  &  & (0.116) & (0.123) & (0.119) \\ 
  & & & & & \\ 
 Income92,000 &  &  & $-$0.051 & 0.095 & $-$0.030 \\ 
  &  &  & (0.117) & (0.123) & (0.120) \\ 
  & & & & & \\ 
 MilitaryServed &  &  &  & 0.209$^{***}$ & 0.134$^{*}$ \\ 
  &  &  &  & (0.071) & (0.069) \\ 
  & & & & & \\ 
  ReligionMormon &  &  &  & $-$0.429$^{***}$ & $-$0.173$^{**}$ \\ 
  &  &  &  & (0.087) & (0.084) \\ 
  & & & & & \\ 
 ReligionProtestant &  &  &  & $-$0.024 & 0.040 \\ 
  &  &  &  & (0.087) & (0.084) \\ 
  & & & & & \\ 
 Children1 &  &  &  &  & 0.073 \\ 
  &  &  &  &  & (0.107) \\ 
  & & & & & \\ 
 Children2 &  &  &  &  & 0.159 \\ 
  &  &  &  &  & (0.110) \\ 
  & & & & & \\ 
Children3 &  &  &  &  & 0.138 \\ 
  &  &  &  &  & (0.107) \\ 
  & & & & & \\ 
 Children4 &  &  &  &  & 0.158 \\ 
  &  &  &  &  & (0.109) \\ 
  & & & & & \\ 
 MarriageOppose &  &  &  &  & $-$0.481$^{***}$ \\ 
  &  &  &  &  & (0.084) \\ 
  & & & & & \\ 
 MarriageSupport &  &  &  &  & 0.010 \\ 
  &  &  &  &  & (0.085) \\ 
  & & & & & \\ 
 Constant & 6.240$^{***}$ & 5.540$^{***}$ & 5.813$^{***}$ & 5.697$^{***}$ & 5.375$^{***}$ \\ 
  & (0.078) & (0.108) & (0.139) & (0.161) & (0.175) \\ 
  & & & & & \\ 
\hline \\[-1.8ex] 
Observations & 6,300 & 6,187 & 5,825 & 5,713 & 6,180 \\ 
R$^{2}$ & 0.042 & 0.046 & 0.025 & 0.019 & 0.018 \\ 
Adjusted R$^{2}$ & 0.041 & 0.044 & 0.022 & 0.016 & 0.014 \\ 
Residual Std. Error & 2.513 (df = 6294) & 2.668 (df = 6177) & 2.548 (df = 5808) & 2.680 (df = 5693) & 2.702 (df = 6154) \\ 
F Statistic & 55.115$^{***}$ (df = 5; 6294) & 32.840$^{***}$ (df = 9; 6177) & 9.211$^{***}$ (df = 16; 5808) & 5.947$^{***}$ (df = 19; 5693) & 4.481$^{***}$ (df = 25; 6154) \\ 
\hline 
\hline \\[-1.8ex] 
\textit{Note:}  & \multicolumn{5}{r}{$^{*}$p$<$0.1; $^{**}$p$<$0.05; $^{***}$p$<$0.01} \\ 
\end{tabular} 
}
\caption{Full Results for Candidate-Choice Conjoint Experiments}\label{tab:full}
\end{table}

\clearpage

\begin{table}[]
\resizebox{15cm}{!}{
\begin{tabular}{@{\extracolsep{5pt}}lccccc} 
\\[-1.8ex]\hline 
\hline \\[-1.8ex] 
 & \multicolumn{5}{c}{Dependent Variable: Preference Score} \\ 
\cline{2-6} 
\\[-1.8ex] & (1) & (2) & (3) & (4) & (5)\\ 
\hline \\[-1.8ex] 
 genderMale & 0.453$^{***}$ & 0.098 & 0.153 & 0.103 & 0.272$^{**}$ \\ 
  & (0.101) & (0.099) & (0.107) & (0.097) & (0.109) \\ 
  & & & & & \\ 
 age52 & 0.088 & 0.028 & $-$0.398$^{**}$ & $-$0.151 & 0.200 \\ 
  & (0.173) & (0.174) & (0.184) & (0.174) & (0.193) \\ 
  & & & & & \\ 
 age60 & $-$0.079 & 0.157 & $-$0.505$^{***}$ & $-$0.001 & 0.347$^{*}$ \\ 
  & (0.173) & (0.173) & (0.184) & (0.174) & (0.193) \\ 
  & & & & & \\ 
 age68 & $-$0.366$^{**}$ & 0.128 & $-$0.575$^{***}$ & $-$0.039 & 0.383$^{**}$ \\ 
  & (0.174) & (0.173) & (0.186) & (0.174) & (0.193) \\ 
  & & & & & \\ 
 age75 & $-$0.855$^{***}$ & $-$0.318 & $-$1.019$^{***}$ & $-$0.385$^{*}$ & $-$0.102 \\ 
  & (0.205) & (0.214) & (0.234) & (0.212) & (0.239) \\ 
  & & & & & \\ 
 eduBA from Ivy League University &  & $-$0.792$^{***}$ & $-$0.427$^{**}$ & $-$0.287$^{*}$ & 0.096 \\ 
  &  & (0.162) & (0.179) & (0.163) & (0.192) \\ 
  & & & & & \\ 
 eduBA from State University &  & $-$0.281$^{**}$ & $-$0.048 & $-$0.268$^{**}$ & 0.132 \\ 
  &  & (0.126) & (0.137) & (0.124) & (0.143) \\ 
  & & & & & \\ 
 taxOpposite &  & $-$0.184 & $-$0.372$^{***}$ & $-$0.483$^{***}$ & $-$0.170 \\ 
  &  & (0.129) & (0.140) & (0.125) & (0.145) \\ 
  & & & & & \\ 
 taxSupport &  & 1.034$^{***}$ & 0.498$^{***}$ & 0.342$^{***}$ & 0.271$^{*}$ \\ 
  &  & (0.126) & (0.138) & (0.125) & (0.143) \\ 
  & & & & & \\ 
 raceWhite &  &  & 0.056 & 0.095 & $-$0.010 \\ 
  &  &  & (0.108) & (0.103) & (0.116) \\ 
  & & & & & \\ 
 income32,000 &  &  & $-$0.045 & $-$0.338$^{*}$ & 0.021 \\ 
  &  &  & (0.211) & (0.186) & (0.214) \\ 
  & & & & & \\ 
 income5.1 million &  &  & $-$0.260 & $-$0.774$^{***}$ & 0.058 \\ 
  &  &  & (0.212) & (0.189) & (0.217) \\ 
  & & & & & \\ 
 income54,000 &  &  & $-$0.238 & $-$0.344$^{**}$ & 0.024 \\ 
  &  &  & (0.167) & (0.150) & (0.171) \\ 
  & & & & & \\ 
 income65,000 &  &  & $-$0.079 & 0.013 & 0.078 \\ 
  &  &  & (0.166) & (0.151) & (0.170) \\ 
  & & & & & \\ 
 income92,000 &  &  & $-$0.047 & $-$0.250$^{*}$ & 0.172 \\ 
  &  &  & (0.165) & (0.151) & (0.171) \\ 
  & & & & & \\ 
 miliServed &  &  &  & 0.088 & 0.069 \\ 
  &  &  &  & (0.101) & (0.114) \\ 
  & & & & & \\ 
 reliProtestant &  &  &  & 0.116 & $-$0.041 \\ 
  &  &  &  & (0.110) & (0.126) \\ 
  & & & & & \\ 
 child1 &  &  &  &  & 0.320$^{*}$ \\ 
  &  &  &  &  & (0.193) \\ 
  & & & & & \\ 
 child2 &  &  &  &  & 0.044 \\ 
  &  &  &  &  & (0.192) \\ 
  & & & & & \\ 
 child3 &  &  &  &  & 0.158 \\ 
  &  &  &  &  & (0.191) \\ 
  & & & & & \\ 
 child4 &  &  &  &  & $-$0.285 \\ 
  &  &  &  &  & (0.239) \\ 
  & & & & & \\ 
 gayOpposite &  &  &  &  & $-$0.453$^{***}$ \\ 
  &  &  &  &  & (0.145) \\ 
  & & & & & \\ 
 gaySupport &  &  &  &  & 0.599$^{***}$ \\ 
  &  &  &  &  & (0.143) \\ 
  & & & & & \\ 
 Constant & 5.707$^{***}$ & 5.664$^{***}$ & 6.406$^{***}$ & 6.071$^{***}$ & 4.829$^{***}$ \\ 
  & (0.154) & (0.180) & (0.222) & (0.234) & (0.300) \\ 
  & & & & & \\ 
\hline \\[-1.8ex] 
Observations & 2,373 & 2,359 & 2,420 & 2,775 & 2,520 \\ 
R$^{2}$ & 0.024 & 0.086 & 0.052 & 0.045 & 0.046 \\ 
Adjusted R$^{2}$ & 0.022 & 0.082 & 0.046 & 0.039 & 0.037 \\ 
Residual Std. Error & 2.456 (df = 2367) & 2.353 (df = 2349) & 2.599 (df = 2404) & 2.493 (df = 2757) & 2.687 (df = 2496) \\ 
F Statistic & 11.462$^{***}$ (df = 5; 2367) & 24.485$^{***}$ (df = 9; 2349) & 8.762$^{***}$ (df = 15; 2404) & 7.611$^{***}$ (df = 17; 2757) & 5.228$^{***}$ (df = 23; 2496) \\ 
\hline 
\hline \\[-1.8ex] 
\textit{Note:}  & \multicolumn{5}{r}{$^{*}$p$<$0.1; $^{**}$p$<$0.05; $^{***}$p$<$0.01} \\ 
\end{tabular} 
}
\caption{Full Results for Candidate-Choice Conjoint Experiments (Control Salience)}\label{tab:full1}
\end{table}

\clearpage

\section{Figures}

\begin{figure}[h]
    \centering
    \includegraphics[scale=0.7]{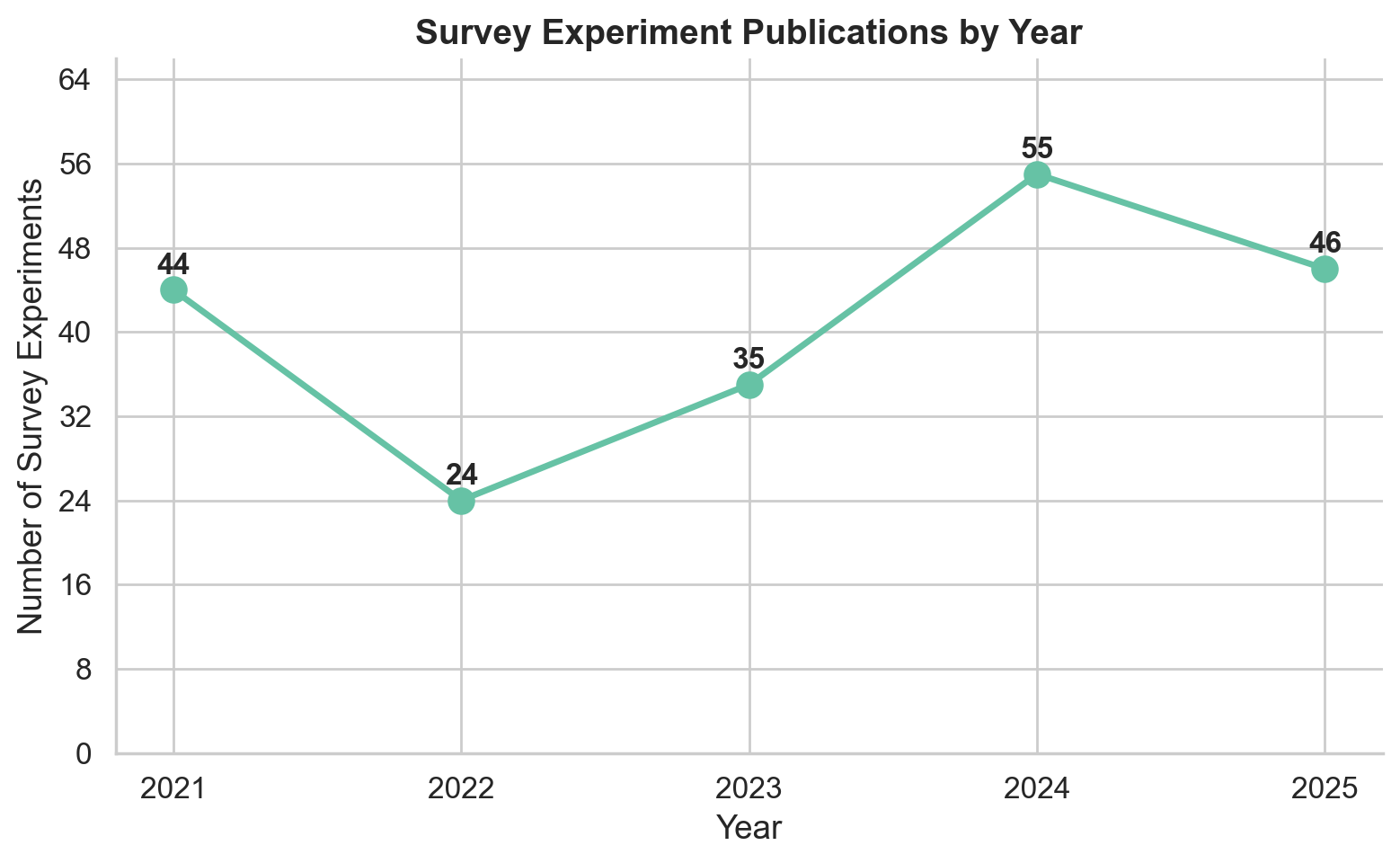}
\end{figure}

\begin{figure}[h]
    \centering
    \includegraphics[scale=0.7]{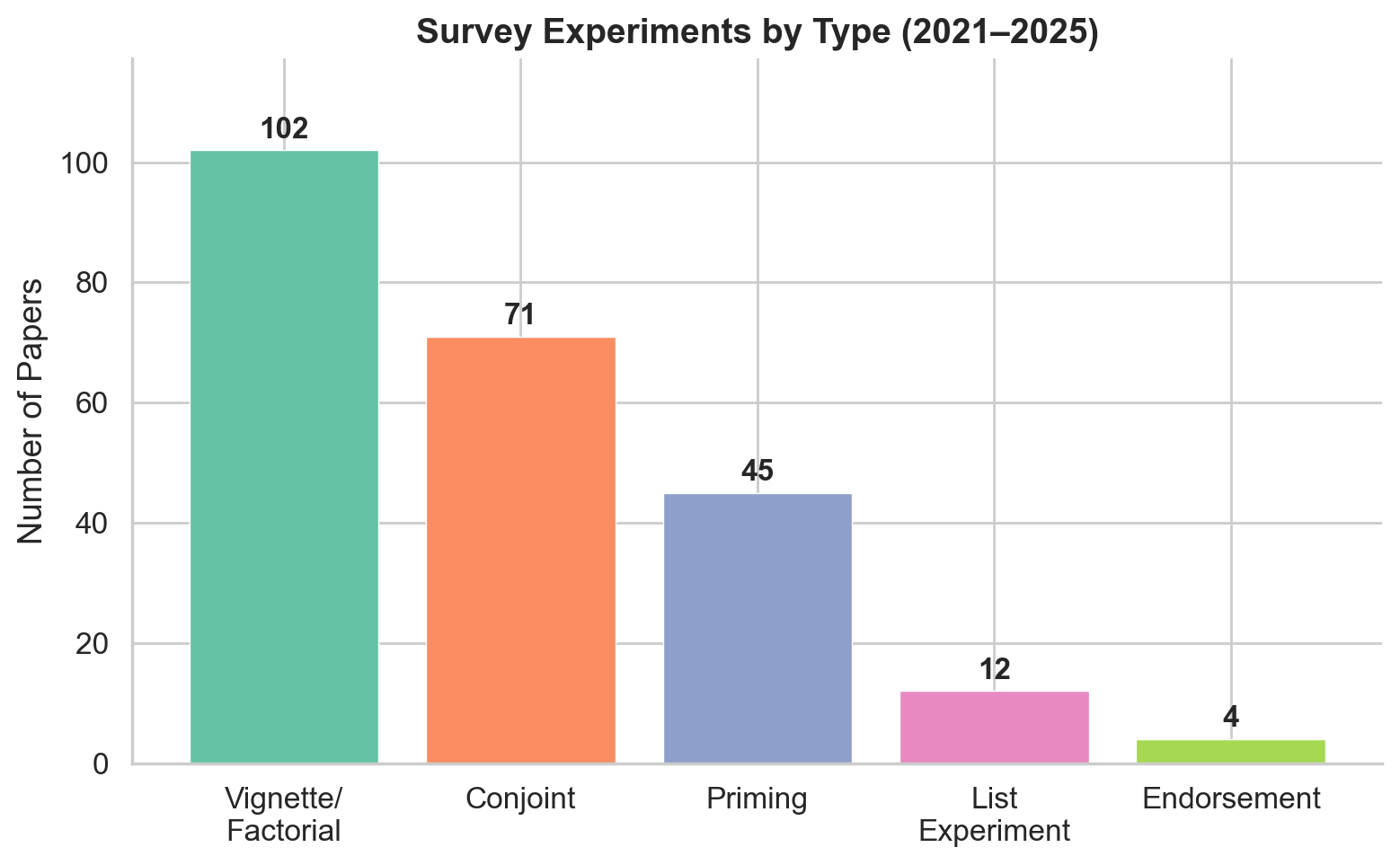}
\end{figure}

\begin{figure}[h]
    \centering
    \includegraphics[scale=0.5]{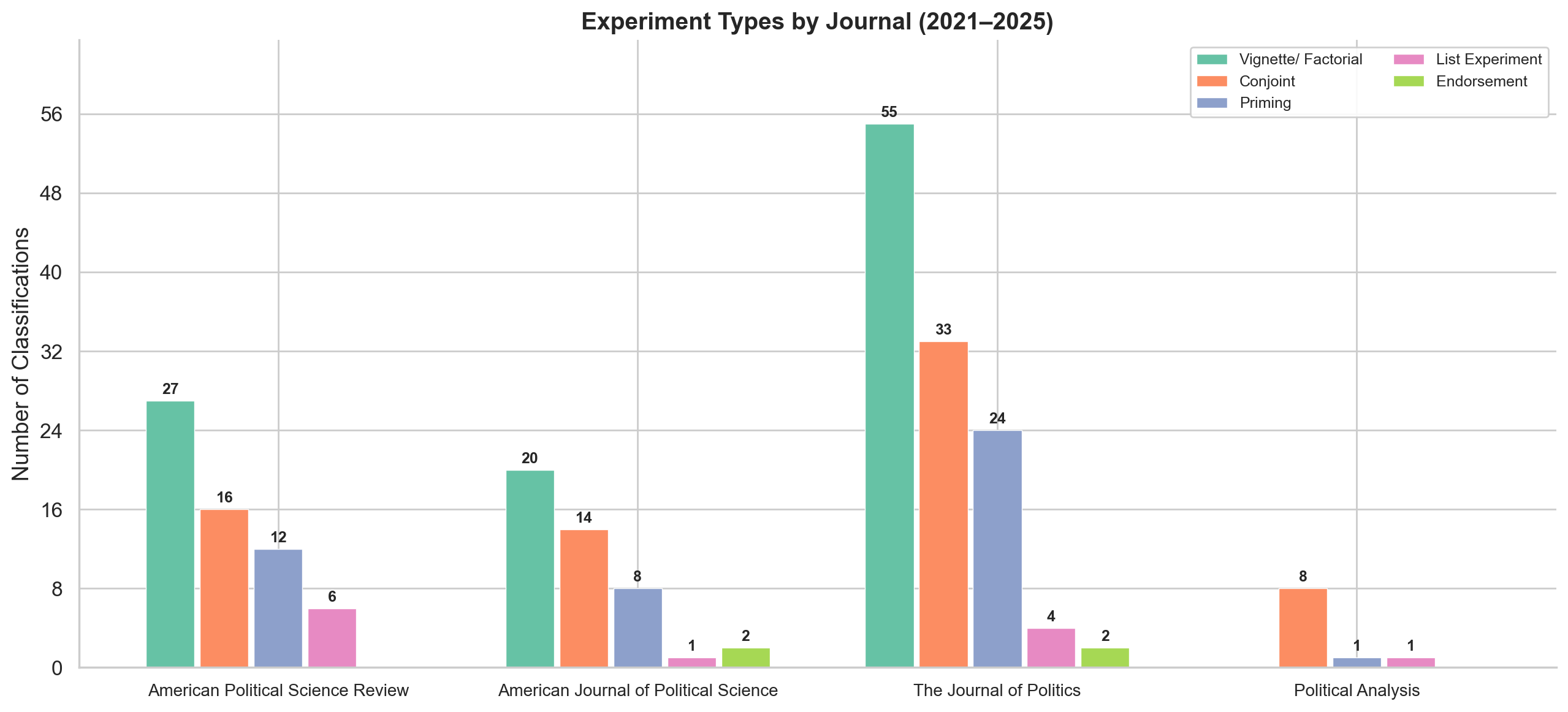}
\end{figure}

\begin{figure}[h]
    \centering
    \includegraphics[scale=0.7]{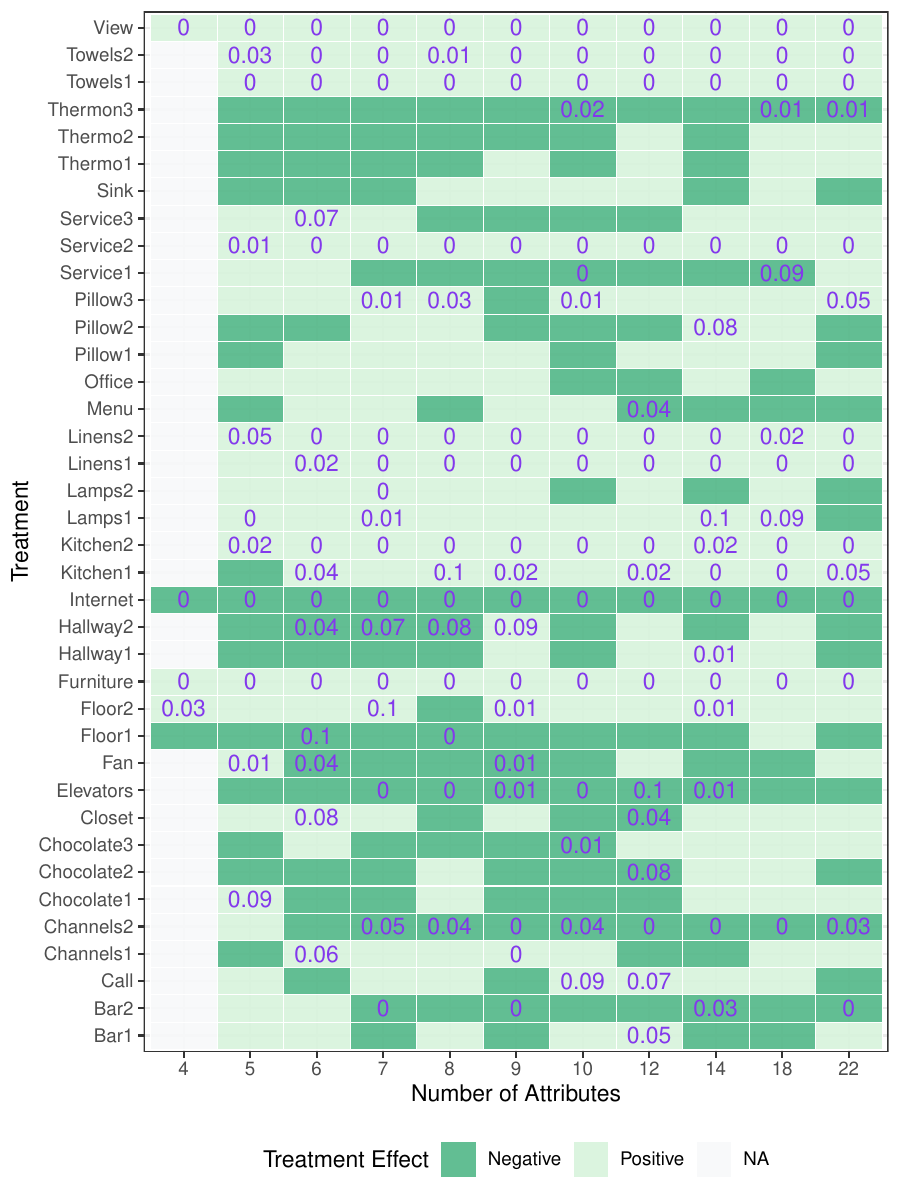}
    \caption{Testing Hypothesis of Effect Reversal with P-value $<0.1$}
    \label{fig:sign_pvalue}
\end{figure}



\end{document}